\def\inprob{\stackrel{p}{\rightarrow}}
\def\indist{\stackrel{d}{\rightarrow}}
\def\T{{ \mathrm{\scriptscriptstyle T} }}
\def\expit{\text{expit}}
\newcommand{\var}{\text{var}}
\newcommand{\Pb}{\mathbb{P}}
\newcommand{\Pn}{\mathbb{P}_n}
\newcommand{\Gn}{\mathbb{G}_n}
\newcommand{\E}{\mathbb{E}}
\newcommand{\R}{\mathbb{R}}
\DeclareMathOperator*{\argmin}{arg\,min}
\DeclareSymbolFont{bbold}{U}{bbold}{m}{n}
\DeclareSymbolFontAlphabet{\mathbbold}{bbold}
\newcommand{\bZ}{\mathbf{Z}}
\newcommand{\bz}{\mathbf{z}}
\newcommand{\bL}{\mathbf{L}}
\newcommand{\bl}{\mathbf{l}}
\newcommand{\bg}{\mathbf{g}}
\newtheorem{theorem}{Theorem}
\newtheorem{lemma}{Lemma}
\newtheorem{assumption}{Assumption}
\title[Continuous treatment effects]{Nonparametric methods for doubly robust estimation of continuous treatment effects}
\author[Kennedy {\it et al.}]{Edward H. Kennedy, Zongming Ma, Matthew D. McHugh, and \\ Dylan S. Small}
\email{kennedye@mail.med.upenn.edu}
\address{University of Pennsylvania, Philadelphia, USA.}
\begin{document}

\singlespace

\begin{abstract}
Continuous treatments (e.g., doses) arise often in practice, but many available causal effect estimators are limited by either requiring parametric models for the effect curve, or by not allowing doubly robust covariate adjustment. We develop a novel kernel smoothing approach that requires only mild smoothness assumptions on the effect curve, and still allows for misspecification of either the treatment density or outcome regression. We derive asymptotic properties and give a procedure for data-driven bandwidth selection. The methods are illustrated via simulation and in a study of the effect of nurse staffing on hospital readmissions penalties. 
\end{abstract}
\keywords{causal inference, dose-response, efficient influence function, kernel smoothing, semiparametric estimation.}

\section{Introduction}

Continuous treatments or exposures (such as dose, duration, and frequency) arise very often in practice, especially in observational studies. Importantly, such treatments lead to effects that are naturally described by curves (e.g., dose-response curves) rather than scalars, as might be the case for binary treatments. Two major methodological challenges in continuous treatment settings are (1) to allow for flexible estimation of the dose-response curve (for example to discover underlying structure without imposing a priori shape restrictions), and (2) to properly adjust for high-dimensional confounders (i.e., pre-treatment covariates related to treatment assignment and outcome).

Consider a recent example involving the Hospital Readmissions Reduction Program, instituted by the Centers for Medicare \& Medicaid Services in 2012, which aimed to reduce preventable hospital readmissions by penalizing hospitals with excess readmissions.  \citet{mchugh2013hospitals} were interested in whether nurse staffing (measured in nurse hours per patient day) affected hospitals' risk of excess readmissions penalty. The left panel of Figure \ref{fig:app} shows data for 2976 hospitals, with nurse staffing (the `treatment') on the x-axis, whether each hospital was penalized (the outcome) on the y-axis, and a loess curve fit to the data (without any adjustment). One way to characterize effects is to imagine setting all hospitals' nurse staffing to the same level, and seeing if changes in this level yield changes in excess readmissions risk. Such questions cannot be answered by simply comparing hospitals' risk of penalty across levels of nurse staffing, since hospitals differ in many important ways that could be related to both nurse staffing and excess readmissions (e.g., size, location, teaching status, among many other factors). The right panel of Figure \ref{fig:app} displays the extent of these hospital differences, showing for example that hospitals with more nurse staffing are also more likely to be high-technology hospitals and see patients with higher socioeconomic status. To correctly estimate the effect curve, and fairly compare the risk of readmissions penalty at different nurse staffing levels, one must adjust for hospital characteristics  appropriately. 

\begin{figure}[h!]
\includegraphics[width=\textwidth]{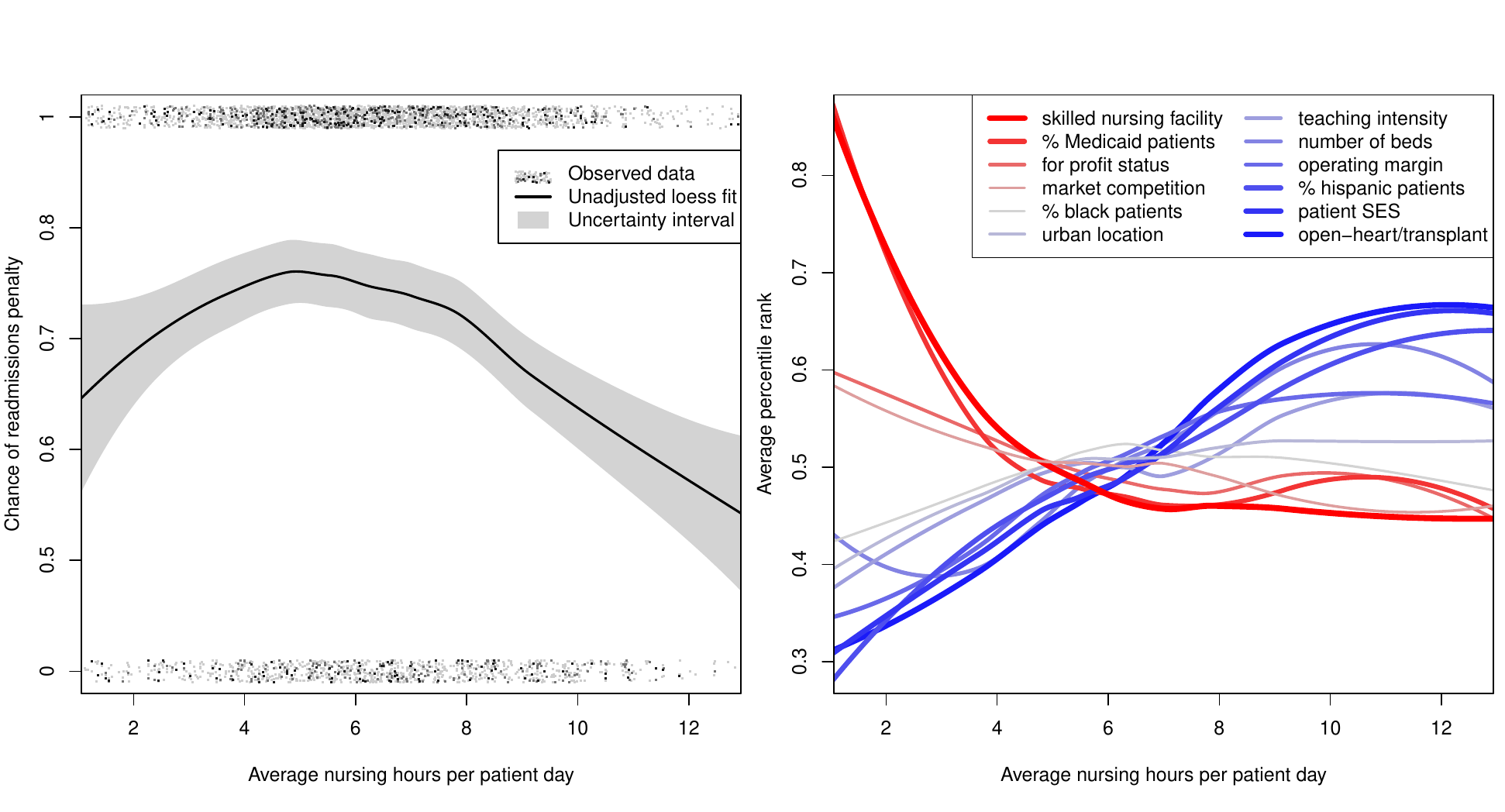}
\caption{Left panel: Observed treatment and outcome data with unadjusted loess fit. Right panel: Average covariate value as a function of exposure, after transforming to percentiles to display on common scale.}
\label{fig:app}
\end{figure}

In practice, the most common approach for estimating continuous treatment effects is based on regression modeling of how the outcome relates to covariates and treatment (e.g., \citet{imbens2004nonparametric}, \citet{hill2011bayesian}). However, this approach relies entirely on correct specification of the outcome model, does not incorporate available information about the treatment mechanism, and is sensitive to the curse of dimensionality by inheriting the rate of convergence of the outcome regression estimator.  \citet{hirano2004propensity}, \citet{imai2004causal}, and \citet{galvao2015uniformly} adapted propensity score-based approaches to the continuous treatment setting, but these similarly rely on correct specification of at least a model for treatment (e.g., the conditional treatment density).

In contrast, semiparametric doubly robust estimators \citep{robins2001inference,van2003unified} are based on modeling both the treatment and outcome processes and, remarkably, give consistent estimates of effects as long as one of these two nuisance processes is modeled well enough (not necessarily both). Beyond giving two independent chances at consistent estimation, doubly robust methods can also attain faster rates of convergence than their nuisance (i.e., outcome and treatment process) estimators when both models are consistently estimated; this makes them less sensitive to the curse of dimensionality and can allow for inference even after using flexible machine learning-based adjustment. However, standard semiparametric doubly robust methods for dose-response estimation rely on parametric models for the effect curve, either by explicitly assuming a parametric dose-response curve \citep{robins2000marginal,van2003unified}, or else by projecting the true curve onto a parametric working model \citep{neugebauer2007nonparametric}. Unfortunately, the first approach can lead to substantial bias under model misspecification, and the second can be of limited practical use if the working model is far away from the truth. 

Recent work has extended semiparametric doubly robust methods to more complicated nonparametric and high-dimensional settings. In a foundational paper, \citet{van2003cross} proposed a powerful cross-validation framework for estimator selection in general censored data and causal inference problems. Their empirical risk minimization approach allows for global nonparametric modeling in general semiparametric settings involving complex nuisance parameters. For example, \citet{diaz2013targeted} considered global modeling in the dose-response curve setting, and developed a doubly robust substitution estimator of risk. In nonparameric problems it is also important to consider non-global learning methods, e.g., via local and penalized modeling \citep{gyorfi2002distribution}. \citet{rubin2005general, rubin2006doubly, rubin2006extending} proposed extensions to such paradigms in numerous important problems, but the former considered weighted averages of dose-response curves and the latter did not consider doubly robust estimation.

In this paper we present a new approach for causal dose-response estimation that is doubly robust without requiring parametric assumptions, and which can naturally incorporate general machine learning methods. The approach is motivated by semiparametric theory for a particular stochastic intervention effect and a corresponding doubly robust mapping. Our method has a simple two-stage implementation that is fast and easy to use with standard software: in the first stage a pseudo-outcome is constructed based on the doubly robust mapping, and in the second stage the pseudo-outcome is regressed on treatment via off-the-shelf nonparametric regression and machine learning tools. We provide asymptotic results for a kernel version of our approach under weak assumptions, which only require mild smoothness conditions on the effect curve and allow for flexible data-adaptive estimation of relevant nuisance functions. We also discuss a simple method for bandwidth selection based on cross-validation. The methods are illustrated via simulation, and in the study discussed earlier about the effect of hospital nurse staffing on excess readmission penalties.

\section{Background}

\subsection{Data and notation}

Suppose we observe an independent and identically distributed sample $(\bZ_1,...,\bZ_n)$ where $\bZ=(\bL,A,Y)$ has support $\mathcal{Z}=(\mathcal{L} \times \mathcal{A} \times \mathcal{Y})$. Here $\bL$ denotes a vector of covariates, $A$ a continuous treatment or exposure, and $Y$ some outcome of interest. We characterize causal effects using potential outcome notation \citep{rubin1974estimating}, and so let $Y^a$ denote the potential outcome that would have been observed under treatment level $a$. 

We denote the distribution of $\bZ$ by $P$, with density $p(\bz)=p(y \mid \bl,a) p(a \mid \bl) p(\bl)$ with respect to some dominating measure. We let $\Pn$ denote the empirical measure so that empirical averages $n^{-1} \sum_i f(\bZ_i)$ can be written as $\Pn\{f(\bZ)\}=\int f(\bz) d\Pn(\bz)$. To simplify the presentation we denote the mean outcome given covariates and treatment with $\mu(\bl,a)=\E(Y \mid \bL=\bl,A=a)$, denote the conditional treatment density given covariates with $\pi(a \mid \bl)=\frac{\partial}{\partial a} P(A \leq a \mid \bL=\bl)$, and denote the marginal treatment density with $\varpi(a)=\frac{\partial}{\partial a}P(A\leq a)$. Finally, we use $||f||=\{ \int f(\bz)^2 dP(\bz) \}^{1/2}$ to denote the $L_2(P)$ norm, and we use $||f||_{\mathcal{X}}=\sup_{x \in \mathcal{X}} | f(x)|$ to denote the uniform norm of a generic function $f$ over $x \in \mathcal{X}$.

\subsection{Identification}

In this paper our goal is to estimate the effect curve $\theta(a) = \E(Y^a)$. Since this quantity is defined in terms of potential outcomes that are not directly observed, we must consider assumptions under which it can be expressed in terms of observed data. A full treatment of identification in the presence of continuous random variables was given by  \citet{gill2001causal}; we refer the reader there for details. The assumptions most commonly employed for identification are as follows (the following must hold for any $a \in \mathcal{A}$ at which $\theta(a)$ is to be identified).
\begin{assumption}
Consistency: $A=a$ implies $Y=Y^a$.
\end{assumption}
\begin{assumption}
Positivity: $\pi(a \mid \bl) \geq \pi_{min} > 0$ for all $\bl \in \mathcal{L}$.
\end{assumption}
\begin{assumption}
Ignorability: $\E(Y^a \mid \bL,A) = \E(Y^a \mid \bL)$.
\end{assumption}

Assumptions 1--3 can all be satisfied by design in randomized trials, but in observational studies they may be violated and are generally untestable. The consistency assumption ensures that potential outcomes are defined uniquely by a subject's own treatment level and not others' levels (i.e., no interference), and also not by the way treatment is administered (i.e., no different versions of treatment). Positivity says that treatment is not assigned deterministically, in the sense that every subject has some chance of receiving treatment level $a$, regardless of covariates; this can be a particularly strong assumption with continuous treatments. Ignorability says that the mean potential outcome under level $a$ is the same across treatment levels once we condition on covariates (i.e., treatment assignment is unrelated to potential outcomes within strata of covariates), and requires sufficiently many relevant covariates to be collected. Using the same logic as with discrete treatments, it is straightforward to show that under Assumptions 1--3 the effect curve $\theta(a)$ can be identified with observed data as
\begin{equation}
\theta(a) = \E\{ \mu(\bL,a) \} = \int_{\mathcal{L}} \mu(\bl,a) \ dP(\bl) .
\label{eq:ident1}
\end{equation}
Even if we are not willing to rely on Assumptions 1 and 3, it may often still be of interest to estimate $\theta(a)$ as an adjusted measure of association, defined purely in terms of observed data.

\section{Main results}

In this section we develop doubly robust estimators of the effect curve $\theta(a)$ without relying on parametric models. First we describe the logic behind our proposed approach, which is based on finding a doubly robust mapping whose conditional expectation given treatment equals the effect curve of interest, as long as one of two nuisance parameters is correctly specified. To find this mapping, we derive a novel efficient influence function for a stochastic intervention parameter. Our proposed method is based on regressing this doubly robust mapping on treatment using off-the-shelf nonparametric regression and machine learning methods. We derive asymptotic properties for a particular version of this approach based on local-linear kernel smoothing. Specifically, we give conditions for consistency and asymptotic normality, and describe how to use cross-validation to select the bandwidth parameter in practice.

\subsection{Setup and doubly robust mapping}

If $\theta(a)$ is assumed known up to a finite-dimensional parameter, for example $\theta(a)=\psi_0+\psi_1 a$ for $(\psi_0,\psi_1) \in \R^2$, then standard semiparametric theory can be used to derive the efficient influence function, from which one can obtain the efficiency bound and an efficient estimator \citep{bickel1993efficient,van2003unified,tsiatis2006semiparametric}. However, such theory is not directly available if we only  assume, for example, mild smoothness conditions on $\theta(a)$ (e.g., differentiability). This is due to the fact that without parametric assumptions $\theta(a)$ is not pathwise differentiable, and root-n consistent estimators do not exist \citep{bickel1993efficient,diaz2013targeted}. In this case there is no developed efficiency theory.

To derive doubly robust estimators for $\theta(a)$ without relying on parametric models, we adapt semiparametric theory in a novel way similar to the approach of \citet{rubin2005general, rubin2006doubly}. Our goal is to find a function $\xi(\bZ;\pi,\mu)$ of the observed data $\bZ$ and nuisance functions $(\pi,\mu)$ such that
$$ \E\{ \xi(\bZ; \overline\pi,\overline\mu) \mid A=a\} = \theta(a) $$
if either $\overline\pi=\pi$ or $\overline\mu=\mu$ (not necessarily both). Given such a mapping, off-the-shelf nonparametric regression and machine learning methods could be used to estimate $\theta(a)$ by regressing $\xi(\bZ;\hat\pi,\hat\mu)$ on treatment $A$, based on estimates $\hat\pi$ and $\hat\mu$. 

This doubly robust mapping is intimately related to semiparametric theory and especially the efficient influence function for a particular parameter. Specifically, if $\E\{ \xi(\bZ; \overline\pi,\overline\mu) \mid A=a\} = \theta(a)$ then it follows that  $\E\{ \xi(\bZ; \overline\pi,\overline\mu) \}=\psi$ for
\begin{equation}
 \psi=\int_\mathcal{A} \int_\mathcal{L} \mu(\bl,a) \varpi(a) \ dP(\bl) \ da . \label{eq:psi}
\end{equation}
This indicates that a natural candidate for the unknown mapping $\xi(\bZ;\pi,\mu)$ would be a component of the efficient influence function for the parameter $\psi$, since for regular parameters such as $\psi$ in semi- or non-parametric models, the efficient influence function $\phi(\bZ; \pi,\mu)$ will be doubly robust in the sense that $\E\{\phi(\bZ;\overline\pi,\overline\mu)\}=0$, if either $\overline\pi=\pi$ or $\overline\mu=\mu$ \citep{robins2001inference,van2003unified}. This implies $\E\{\phi(\bZ;\pi,\mu)\} = \E\{\xi(\bZ;\pi,\mu)-\psi\} = 0$ so that $\E\{\xi(\bZ;\overline\pi,\overline\mu)\}=\psi$ if either $\overline\pi=\pi$ or $\overline\mu=\mu$. This kind of logic was first used by \citet{rubin2005general, rubin2006doubly} for full data parameters that are functions of covariates rather than treatment (i.e., censoring) variables.

The parameter $\psi$ is also of interest in its own right. In particular, it represents the average outcome under an intervention that randomly assigns treatment based on the density $\varpi$ (i.e., a randomized trial). Thus comparing the value of this parameter to the average observed outcome provides a test of treatment effect; if the values differ significantly, then there is evidence that the observational treatment mechanism impacts outcomes for at least some units. Stochastic interventions were discussed by \citet{diaz2012population}, for example, but the efficient influence function for $\psi$ has not been given before under a nonparametric model. Thus in Theorem 1 below we give the efficient influence function for this parameter respecting the fact that the marginal density $\varpi$ is unknown.

\begin{theorem}
Under a nonparametric model, the efficient influence function for $\psi$ defined in \eqref{eq:psi} is  $\xi(\bZ;\pi,\mu) - \psi + \int_\mathcal{A} \{\mu(\bL,a) - \int_\mathcal{L} \mu(\bl,a) dP(\bl) \} \varpi(a) da$, where
$$ \xi(\bZ;\pi,\mu) = \frac{Y-\mu(\bL,A)}{\pi(A \mid \bL)} \int_\mathcal{L} \pi(A \mid \bl) \ dP(\bl) + \int_\mathcal{L} \mu(\bl,A) \ dP(\bl) . $$
\end{theorem}

A proof of Theorem 1 is given in the Appendix (Section 2). Importantly, we also prove that the function $\xi(\bZ;\pi,\mu)$ satisfies its desired double robustness property, i.e., that $\E\{\xi(\bZ;\overline\pi,\overline\mu) \mid A=a\}=\theta(a)$ if either $\overline\pi=\pi$ or $\overline\mu=\mu$. As mentioned earlier, this motivates estimating the effect curve $\theta(a)$ by estimating the nuisance functions $(\pi,\mu)$, and then regressing the estimated pseudo-outcome 
$$ \hat\xi(\bZ;\hat\pi,\hat\mu) = \frac{Y-\hat\mu(\bL,A)}{\hat\pi(A \mid \bL)} \int_\mathcal{L} \hat\pi(A \mid \bl) \ d\Pn(\bl) + \int_\mathcal{L} \hat\mu(\bl,A) \ d\Pn(\bl) $$
on treatment $A$ using off-the-shelf nonparametric regression or machine learning methods.  In the next subsection we describe our proposed approach in more detail, and analyze the properties of an estimator based on kernel estimation.

\subsection{Proposed approach}

In the previous subsection we derived a doubly robust mapping $\xi(\bZ;\pi,\mu)$ for which $\E\{\xi(\bZ;\overline\pi,\overline\mu) \mid A=a\}=\theta(a)$ as long as either $\overline\pi=\pi$ or $\overline\mu=\mu$. This indicates that doubly robust nonparametric estimation of $\theta(a)$ can proceed with a simple two-step procedure, where both steps can be accomplished with flexible machine learning. To summarize, our proposed method is:
\begin{enumerate}
\item[1.] Estimate nuisance functions $(\pi,\mu)$ and obtain predicted values.
\item[2.] Construct pseudo-outcome $\hat\xi(\bZ;\hat\pi,\hat\mu)$ and regress on treatment variable $A$.
\end{enumerate}
We give sample code implementing the above in the Appendix (Section 9).

In what follows we present results for an estimator that uses kernel smoothing in Step 2. Such an approach is related to kernel approximation of a full-data parameter in censored data settings. \citet{robins2001inference} gave general discussion and considered density estimation with missing data,  while \citet{van1998locally}, \citet{van2001inference},  and \citet{van2006estimating} used the approach for current status survival analysis; \citet{wang2010nonparametric} used it implicitly for nonparametric regression with missing outcomes. 

As indicated above, however, a wide variety of flexible methods could be used in our Step 2, including local partitioning or nearest neighbor estimation, global series or spline methods with complexity penalties, or cross-validation-based combinations of methods, e.g., Super Learner \citep{van2007super}. In general we expect the results we report in this paper to hold for many such methods. To see why, let $\hat\theta$ denote the proposed estimator described above (based on some initial nuisance estimators $(\hat\pi,\hat\mu)$ and a particular regression method in Step 2), and let $\overline\theta$ denote an estimator based on an oracle version of the pseudo-outcome $\xi(\bZ;\overline\pi,\overline\mu)$ where $(\overline\pi,\overline\mu)$ are the unknown limits to which the estimators $(\hat\pi,\hat\mu)$ converge. Then
$ || \hat\theta - \theta || \leq || \hat\theta - \overline\theta || + || \overline\theta - \theta || $, where the second term on the right can be analyzed with standard theory since $\overline\theta$ is a regression of a simple fixed function $\xi(\bZ;\overline\pi,\overline\mu)$ on $A$, and the first term will be small depending on the convergence rates of $\hat\pi$ and $\hat\mu$. A similar point was discussed by \citet{rubin2005general, rubin2006doubly}. 

The local linear kernel version of our estimator is $\hat\theta_h(a)=\bg_{ha}(a)^\T \boldsymbol{\hat\beta}_h(a)$, where $\bg_{ha}(t)=(1,\frac{t-a}{h})^\T$ and 
\begin{equation}
\boldsymbol{\hat\beta}_h(a) = \argmin_{\boldsymbol\beta \in \R^2} \ \Pn\left[ K_{ha}(A) \Big\{ \hat\xi(\bZ;\hat\pi,\hat\mu)  - \bg_{ha}(A)^\T \boldsymbol\beta \Big\}^2 \right] 
\end{equation}
for $K_{ha}(t)=h^{-1}K\{(t-a)/h\}$ with $K$ a standard kernel function (e.g., a symmetric probability density) and $h$ a scalar bandwidth parameter. This is a standard local linear kernel regression of $\hat\xi(\bZ;\hat\pi,\hat\mu)$ on $A$. For overviews of kernel smoothing see, e.g., \citet{fan1996local}, \citet{wasserman2006all}, and \citet{li2007nonparametric}. Under near-violations of positivity, the above estimator could potentially lie outside the range of possible values for $\theta(a)$ (e.g., if $Y$ is binary); thus we present a targeted minimum loss-based estimator (TMLE) in the Appendix (Section 4), which does not have this problem. Alternatively one could project onto a logistic model in (3).

\subsection{Consistency of kernel estimator}

In Theorem 2 below we give conditions under which the proposed kernel estimator $\hat\theta_h(a)$ is consistent for $\theta(a)$, and also give the corresponding rate of convergence. In general this result follows if the bandwidth decreases with sample size slowly enough, and if either of the nuisance functions $\pi$ or $\mu$ is estimated well enough (not necessarily both). The rate of convergence is a sum of two rates: one from standard nonparametric regression problems (depending on the bandwidth $h$), and another coming from estimation of the nuisance functions $\pi$ and $\mu$.

\begin{theorem}
Let $\overline\pi$ and $\overline\mu$ denote fixed functions to which $\hat\pi$ and $\hat\mu$ converge in the sense that $||\hat\pi-\overline\pi||_\mathcal{Z}=o_p(1)$ and $||\hat\mu-\overline\mu||_\mathcal{Z}=o_p(1)$, and let $a \in \mathcal{A}$ denote a point in the interior of the compact support $\mathcal{A}$ of $A$. Along with Assumption 2 (Positivity), assume the following:
\begin{enumerate}
\item Either $\overline\pi=\pi$ or $\overline\mu=\mu$.
\item The bandwidth $h=h_n$ satisfies $h \rightarrow 0$ and $nh^3 \rightarrow \infty$ as $n \rightarrow \infty$.
\item $K$ is a continuous symmetric probability density with support $[-1,1]$.
\item $\theta(a)$ is twice continuously differentiable, and both $\varpi(a)$ and the conditional density of $\xi(\bZ;\overline\pi,\overline\mu)$ given $A=a$ are continuous as functions of $a$.
\item The estimators $(\hat\pi,\hat\mu)$ and their limits $(\overline\pi,\overline\mu)$ are contained in uniformly bounded function classes with finite uniform entropy integrals (as defined in Section 5 of the Appendix), with $1/\hat\pi$ and $1/\overline\pi$ also uniformly bounded.
\end{enumerate}
Then 
$$ | \hat\theta_h(a) - \theta(a) | = O_p\left( \frac{1}{\sqrt{nh}} + h^2 + r_n(a) s_n(a) \right) $$
where 
$$ \sup_{t: |t-a|\leq h} || \hat\pi(t  \mid \bL)-\pi(t \mid \bL) || = O_p\Big(r(n)\Big) $$
$$ \sup_{t: |t-a|\leq h} || \hat\mu(\bL , t)-\mu(\bL, t) || = O_p\Big(s(n)\Big) $$
are the `local' rates of convergence of $\hat\pi$ and $\hat\mu$ near $A=a$.
\end{theorem}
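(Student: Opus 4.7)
The plan is to split the error via the oracle decomposition
$$\hat\theta_h(a) - \theta(a) = \big[\hat\theta_h(a) - \tilde\theta_h(a)\big] + \big[\tilde\theta_h(a) - \theta(a)\big],$$
where $\tilde\theta_h(a)$ is the infeasible estimator obtained by substituting the population limits $\overline\eta$ for $\hat\eta$ in the closed-form expression for $\hat\theta_h(a)$ from Section 3.2. The oracle gap will supply the classical local-linear regression rate $h^2 + 1/\sqrt{nh}$, while the nuisance gap will supply the double-robust product rate $r_n(a) s_n(a)$.

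For the oracle term, Assumption 1 forces either $\overline\pi = \pi$ or $\overline\mu = \mu$, and a short Bayes-rule calculation (analogous to the double-robustness verification for Theorem 1 in the Appendix) gives $\E\{\xi(\mathbf{Z};\overline\eta) \mid A = a\} = \theta(a)$. Hence $\tilde\theta_h(a)$ is a vanilla local linear kernel regression of an outcome with conditional mean $\theta(a)$ onto the scalar $A$, and the classical bias-variance rates apply: a second-order Taylor expansion of $\theta$ around $a$, with the first-order term annihilated by the symmetry of $K$, yields bias $O(h^2)$; bounded conditional variance of $\xi(\mathbf{Z};\overline\eta)$ given $A$ (Assumption 5) together with continuity of $\varpi$ and of the conditional density at $a$ (Assumption 4) yields variance $O_p(1/(nh))$, following Fan and Gijbels (1996).

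For the nuisance term, write
$$\hat\theta_h(a) - \tilde\theta_h(a) = \mathbf{g}_{ha}(a)^\T \hat{\mathbf{D}}_{ha}^{-1} \Pn\big[\mathbf{g}_{ha}(A) K_{ha}(A) \{\xi(\mathbf{Z};\hat\eta) - \xi(\mathbf{Z};\overline\eta)\}\big],$$
where $\hat{\mathbf{D}}_{ha} = \Pn\{\mathbf{g}_{ha}(A) K_{ha}(A) \mathbf{g}_{ha}(A)^\T\}$ converges to a deterministic matrix $\mathbf{D}_{ha}$ with bounded inverse scaled by $\varpi(a)^{-1}$, and decompose $\Pn = (\Pn - P) + P$. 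The centered empirical process piece is $o_p(1/\sqrt{nh})$ via a standard maximal inequality applied to the uniformly bounded, finite-uniform-entropy function class from Assumption 5, together with the uniform consistency $\|\hat\eta - \overline\eta\|_\mathcal{Z} = o_p(1)$. The expectation piece is handled by iterated expectation conditional on $A$ and Bayes' rule $p(\mathbf{l} \mid A = t) = \pi(t\mid\mathbf{l}) p(\mathbf{l})/\varpi(t)$: after adding and subtracting $\pi$ in the denominator and $\mu$ in the numerator, $\E[\xi(\mathbf{Z};\hat\eta) - \xi(\mathbf{Z};\overline\eta) \mid A = t]$ factors into an integrated product of $(\hat\pi - \pi)$ and $(\hat\mu - \mu)$ plus lower-order terms, which is precisely the double-robust second-order remainder built into the influence function of Theorem 1. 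Cauchy-Schwarz then bounds this uniformly over $|t - a| \le h$ by $O_p(r_n(a) s_n(a))$, and integrating against $K_{ha}$ (which has unit mass and support $[a-h, a+h]$) preserves the rate.

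The main obstacle is the empirical process step, since $K_{ha}$ has sup-norm $O(1/h)$ and its support shrinks with $h$, so the envelope of the relevant function class grows as $h \to 0$. One must carefully track how the $L_2(P)$ modulus of continuity of the class $\{\mathbf{g}_{ha} K_{ha}(\xi(\cdot;\eta) - \xi(\cdot;\overline\eta))\}$ scales with $h$: the $h^{-1/2}$ factor inherited from the kernel, combined with the bandwidth rate $nh^3 \to \infty$, is what makes the remainder $o_p(1/\sqrt{nh})$ under the finite-uniform-entropy condition. A secondary subtlety is that $\hat\varpi$ and $\hat m$ are themselves sample averages of $\hat\pi(a \mid \mathbf{L}_i)$ and $\hat\mu(\mathbf{L}_i, a)$, introducing an additional $O_p(1/\sqrt{n})$ fluctuation that is negligible relative to the leading rate.
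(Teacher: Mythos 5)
Your proposal is correct and follows essentially the same route as the paper: the oracle decomposition into $\tilde\theta_h(a)-\theta(a)$ plus the empirical-process and expectation pieces of the nuisance gap, the identity $\E\{\xi(\mathbf{Z};\boldsymbol{\overline\eta})\mid A=a\}=\theta(a)$ under condition (a), classical local linear rates for the oracle term, a stochastic-equicontinuity/maximal-inequality argument with the $\sqrt{h}$-rescaled kernel class for the centered term, and the Bayes-rule factorization plus Cauchy--Schwarz giving the $r_n(a)s_n(a)$ product for the expectation term. You also correctly flag the two genuine technical subtleties the paper's Lemmas 1 and 2 are built to handle (the $O(1/h)$ envelope and the $O_p(1/\sqrt{n})$ fluctuation from $\hat\varpi$ and $\hat m$), so nothing essential is missing.
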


A proof of Theorem 2 is given in the Appendix (Section 6). The required conditions are all quite weak. Condition (a) is arguably the most important of the conditions, and says that at least one of the estimators $\hat\pi$ or $\hat\mu$ must be consistent for the true $\pi$ or $\mu$ in terms of the uniform norm. Since only one of the nuisance estimators is required to be consistent (not both), Theorem 2 shows the double robustness of the proposed estimator $\hat\theta_h(a)$. Conditions (b), (c), and (d) are all common in standard nonparametric regression problems, while condition (e) involves the complexity of the estimators $\hat\pi$ and $\hat\mu$ (and their limits), and is a usual minimal regularity condition for problems involving nuisance functions. 

Condition (b) says that the bandwidth parameter $h$ decreases with sample size but not too quickly (so that $nh^3 \rightarrow \infty$). This is a standard requirement in local linear kernel smoothing \citep{fan1996local,wasserman2006all,li2007nonparametric}. Note that since $nh=nh^3 /h^2$, it is implied that $nh \rightarrow \infty$; thus one can view $nh$ as a kind of effective or local sample size. Roughly speaking, the bandwidth $h$ needs to go to zero in order to control bias, while the local sample size $nh$ (and $nh^3$) needs to go to infinity in order to control variance. We postpone more detailed discussion of the bandwidth parameter until a later subsection, where we detail how it can be chosen in practice using cross-validation. Condition (c) puts some minimal restrictions on the kernel function. It is clearly satisfied for most common kernels, including the uniform kernel $K(u)=I(|u|\leq 1)/2$, the Epanechnikov kernel $K(u)=(3/4)(1-u^2) I(|u|\leq 1)$, and a truncated version of the Gaussian kernel $K(u)= I(|u| \leq 1) \phi(u) /\{2\Phi(1)-1\}$ with $\phi$ and $\Phi$ the density and distribution functions for a standard normal random variable. Condition (d) restricts the smoothness of the effect curve $\theta(a)$, the density of $\varpi(a)$, and the conditional density given $A=a$ of the limiting pseudo-outcome $\xi(\bZ;\overline\pi,\overline\mu)$. These are standard smoothness conditions imposed in nonparametric regression problems. By assuming more smoothness of $\theta(a)$, bias-reducing (higher-order) kernels could achieve faster rates of convergence and even approach the parametric root-n rate (see for example \citet{fan1996local}, \citet{wasserman2006all}, and others).

Condition (e) puts a mild restriction on how flexible the nuisance estimators (and their corresponding limits) can be, although such uniform entropy conditions still allow for a wide array of data-adaptive estimators and, importantly, do not require the use of parametric models. \citet{andrews1994empirical} (Section 4), \citet{van1996weak} (Sections 2.6--2.7), and \citet{van2000asymptotic} (Examples 19.6--19.12) discuss a wide variety of function classes with finite uniform entropy integrals. Examples include standard parametric classes of functions indexed by Euclidean parameters (e.g., parametric functions satisfying a Lipschitz condition), smooth functions with uniformly bounded partial derivatives, Sobolev classes of functions, as well as convex combinations or Lipschitz transformations of any such sets of functions. The uniform entropy restriction in condition (e) is therefore not a very strong restriction in practice; however, it could be further weakened via sample splitting techniques (see Chapter 27 of \citet{van2011targeted}).

The convergence rate given in the result of Theorem 2 is a sum of two components. The first, $1/\sqrt{nh} + h^2$, is the rate achieved in standard nonparametric regression problems without nuisance functions. Note that if $h$ tends to zero slowly, then $1/\sqrt{nh}$ will tend to zero quickly but $h^2$ will tend to zero more slowly; similarly if $h$ tends to zero quickly, then $h^2$ will as well, but $1/\sqrt{nh}$ will tend to zero more slowly. Balancing these two terms requires $h \sim n^{-1/5}$ so that $1/\sqrt{nh} \sim h^2 \sim n^{-2/5}$. This is the optimal pointwise rate of convergence for standard nonparametric regression on a single covariate, for a twice continuously differentiable regression function. 

The second component, $r_n(a) s_n(a)$, is the product of the local rates of convergence (around $A=a$) of the nuisance estimators $\hat\pi$ and $\hat\mu$ towards their targets $\pi$ and $\mu$. Thus if the nuisance function estimates converge slowly (due to the curse of dimensionality), then the convergence rate of $\hat\theta_h(a)$ will also be slow. However, since the term is a product, we have two chances at obtaining fast convergence rates, showing the bias-reducing benefit of doubly robust estimators. The usual explanation of double robustness is that, even if $\hat\mu$ is misspecified so that $s_n(a)=O(1)$, then as long as $\hat\pi$ is consistent, i.e., $r_n(a)=o(1)$, we will still have consistency since $r_n(a) s_n(a)=o(1)$. But this idea also extends to settings when both $\hat\pi$ and $\hat\mu$ are consistent. For example suppose $h \sim n^{-1/5}$ so that $1/\sqrt{nh}+h^2 \sim n^{-2/5}$, and suppose $\hat\pi$ and $\hat\mu$ are locally consistent with rates $r_n(a)=n^{-2/5}$ and $s_n(a)=n^{-1/10}$. Then the product is $r_n(a) s_n(a) = O(n^{-1/2}) = o(n^{-2/5})$ and the contribution from the nuisance functions is asymptotically negligible, in the sense that the proposed estimator has the same convergence rate as an infeasible estimator with known nuisance functions. Contrast this with non-doubly-robust plug-in estimators whose convergence rate generally matches that of the nuisance function estimator, rather than being faster \citep{van2014higher}.

In Section 8 of the Appendix we give some discussion of uniform consistency, which, along with weak convergence, will be pursued in more detail in future work. 

\subsection{Asymptotic normality of kernel estimator}

In the next theorem we show that if one or both of the nuisance functions are estimated at fast enough rates, then the proposed estimator is asymptotically normal after appropriate scaling.

\pagebreak

\begin{theorem}
Consider the same setting as Theorem 2. Along with Assumption 2 (Positivity) and conditions (a)--(e) from Theorem 2, also assume that:
\begin{enumerate}
\item[(f)] The local convergence rates satisfy $r_n(a) s_n(a) = o_p(1/\sqrt{nh})$.
\end{enumerate}
Then
$$\sqrt{nh} \Big\{ \hat\theta_h(a) - \theta(a) + b_h(a) \Big\} \indist N\!\left(0, \ \frac{\sigma^2(a) \int K(u)^2 \ du}{\varpi(a)} \right) $$
where $b_h(a) = \theta''(a) (h^2 / 2) \int u^2 K(u) \ du + o(h^2)$, and
\begin{equation*}
\sigma^2(a) = \E\!\left[ \frac{\tau^2(\bL,a) + \{\mu(\bL,a)-\overline\mu(\bL,a)\}^2}{\{\overline\pi(a \mid \bL)/\overline\varpi(a)\}^2 / \{\pi(a \mid \bL)/\varpi(a)\}} \right] - \Big\{ \theta(a)-\overline{m}(a) \Big\}^2 
\end{equation*}
for $\tau^2(\bl,a) = \var(Y \mid \bL=\bl, A=a)$, $\overline\varpi(a)=\E\{\overline\pi(a \mid \bL)\}$, $\overline{m}(a)=\E\{\overline\mu(\bL,a)\}$.
\end{theorem}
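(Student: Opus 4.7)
The plan is to split $\hat\theta_h(a) - \theta(a)$ into a \emph{nuisance remainder} that captures the cost of replacing the estimated nuisance functions $\boldsymbol{\hat\eta}$ by their limits $\boldsymbol{\overline\eta}$, plus an \emph{oracle} term to which classical local linear kernel asymptotics apply. Define the oracle
$$\tilde\theta_h(a) = \mathbf{g}_{ha}(a)^\T \hat{\mathbf{D}}_{ha}^{-1} \Pn\bigl\{\mathbf{g}_{ha}(A) K_{ha}(A) \xi(\mathbf{Z};\boldsymbol{\overline\eta})\bigr\},$$
with $\hat{\mathbf{D}}_{ha}=\Pn\{\mathbf{g}_{ha}(A) K_{ha}(A)\mathbf{g}_{ha}(A)^\T\}$. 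A standard kernel calculation using conditions (b)--(d) gives $\hat{\mathbf{D}}_{ha} \inprob \varpi(a)\operatorname{diag}\bigl(1, \int u^2 K(u)\,du\bigr)$, so $\mathbf{g}_{ha}(a)^\T \hat{\mathbf{D}}_{ha}^{-1}$ is bounded in probability and can be factored out of subsequent rate calculations.

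For the nuisance remainder $\hat\theta_h(a) - \tilde\theta_h(a)$, let $\Delta(\mathbf{Z}) = \xi(\mathbf{Z};\boldsymbol{\hat\eta}) - \xi(\mathbf{Z};\boldsymbol{\overline\eta})$ and decompose $\Pn\{\mathbf{g}_{ha}(A) K_{ha}(A)\Delta\}$ into a drift piece $P\{\mathbf{g}_{ha}(A) K_{ha}(A)\Delta\}$ and an empirical process piece $(\Pn-P)\{\mathbf{g}_{ha}(A) K_{ha}(A)\Delta\}$. For the drift piece, conditioning on $A$ and algebraically expanding $\xi$ exposes the same doubly robust algebra used in the proof of Theorem~2: since at least one of $\overline\pi=\pi$ or $\overline\mu=\mu$ holds, $\E[\Delta \mid A = t]$ reduces to a product-of-errors integral of the form $\int (\hat\mu-\mu)(\pi-\hat\pi)/\hat\pi\,dP(\mathbf{l})$, which by Cauchy--Schwarz is $O_p(r_n(t)s_n(t))$ uniformly for $t$ near $a$; integrating against $K_{ha}$ (a localization of total mass one) preserves this order, and condition (f) then gives $o_p(1/\sqrt{nh})$. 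For the empirical process piece, condition (e) makes the relevant class uniformly Donsker with envelope $F_n(\mathbf{z}) \lesssim h^{-1} K((A-a)/h)\{\|\hat\pi-\overline\pi\|_\mathcal{Z} + \|\hat\mu-\overline\mu\|_\mathcal{Z}\}$ whose $L_2(P)$ norm is $O_p(h^{-1/2}) \cdot o_p(1)$; the standard maximal inequality then yields $o_p(h^{-1/2}/\sqrt{n}) = o_p(1/\sqrt{nh})$.

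For the oracle $\tilde\theta_h(a) - \theta(a)$, the double robustness identity from Theorem~2 gives $\E[\xi(\mathbf{Z};\boldsymbol{\overline\eta}) \mid A = a] = \theta(a)$, and a direct second-moment calculation, using $\E[(Y-\overline\mu)^2 \mid \mathbf{L}, A] = \tau^2(\mathbf{L},A) + \{\mu-\overline\mu\}^2$ together with the change of measure $dP(\mathbf{l} \mid A=a) = \{\pi(a\mid\mathbf{l})/\varpi(a)\}\,dP(\mathbf{l})$, yields $\var\{\xi(\mathbf{Z};\boldsymbol{\overline\eta}) \mid A = a\} = \sigma^2(a)$. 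Writing $\xi(\mathbf{Z};\boldsymbol{\overline\eta}) = \theta(A) + \varepsilon$ with $\E[\varepsilon \mid A] = 0$, a Taylor expansion of $\theta$ around $a$ using twice continuous differentiability from condition (d) produces the standard local linear bias $b_h(a) = \theta''(a)(h^2/2)\int u^2 K(u)\,du + o(h^2)$, and Lyapunov's CLT applied to the kernel-weighted average of the $\varepsilon_i$'s (whose $(nh)$-scaled conditional variance concentrates at $\sigma^2(a)\int K(u)^2\,du/\varpi(a)$) yields the stated normal limit. Combining with the nuisance-remainder bound completes the proof.

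The main obstacle is the empirical process bound in the nuisance remainder: a naive sup-norm envelope would give only $o_p(n^{-1/2})$, not the sharper $o_p((nh)^{-1/2})$ we need. The correct rate requires using the $h$-localized $L_2(P)$ envelope of order $h^{-1/2}$ that arises from squaring $K_{ha}$ and integrating against $\varpi$, which is only legitimate under the uniform entropy condition (e); this is also the step where the uniform (rather than merely pointwise) consistency of $\hat\pi$ and $\hat\mu$ guaranteed by condition (a) of Theorem~2 plays its essential role.
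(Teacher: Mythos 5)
Your proposal is correct and follows essentially the same route as the paper: the same decomposition into an oracle local-linear term plus a drift term (bounded by the doubly robust product $r_n(a)s_n(a)$ via Cauchy--Schwarz and killed by condition (f)) plus an empirical-process term (made $o_p(1/\sqrt{nh})$ using the uniform entropy condition and the $h^{-1/2}$-order localized $L_2$ envelope, which is exactly what the paper's stochastic equicontinuity Lemma 1 delivers). The only cosmetic difference is that you prove the oracle CLT directly via Taylor expansion and Lyapunov's theorem, whereas the paper cites Fan et al.\ (1994) for the standard local linear asymptotics; the variance computation is identical.
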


The proof of Theorem 3 is given in the Appendix (Section 7). Conditions (a)--(e) are the same as in Theorem 2 and were discussed earlier. Condition (f) puts a restriction on the local convergence rates of the nuisance estimators. This will in general require at least some semiparametric modeling of the nuisance functions. Truly nonparametric estimators of $\pi$ and $\mu$ will typically converge at slow rates due to the curse of dimensionality, and will generally not satisfy the rate requirement in the presence of multiple continuous covariates. Condition (f) basically ensures that estimation of the nuisance functions is irrelevant asymptotically; depending on the specific nuisance estimators used, it could be possible to give weaker but more complicated conditions that allow for a non-negligible asymptotic contribution while still yielding asymptotic normality. 

Importantly, the rate of convergence required by condition (g) of Theorem 3 is slower than the root-n rate typically required in standard semiparametric settings where the parameter of interest is finite-dimensional and Euclidean. For example, in a standard setting where the support $\mathcal{A}$ is finite, a sufficient condition for yielding the requisite asymptotic negligibility for attaining efficiency is $r_n(a)=s_n(a)=o(n^{-1/4})$; however in our setting the weaker condition $r_n(a)=s_n(a)=o(n^{-1/5})$ would be sufficient if $h \sim n^{-1/5}$. Similarly, if one nuisance estimator $\hat\pi$ or $\hat\mu$ is computed with a correctly specified generalized additive model, then the other nuisance estimator would ony need to be consistent (without a rate condition). This is because, under regularity conditions and with optimal smoothing, a generalized additive model estimator converges at rate $O_p(n^{-2/5})$ \citep{horowitz2009semiparametric}, so that if the other nuisance estimator is merely consistent we have $r_n(a) s_n(a) = O(n^{-2/5}) o(1) = o(n^{-2/5})$, which satisfies condition (f) as long as $h \sim n^{-1/5}$. In standard settings such flexible nuisance estimation would make a non-negligible contribution to the limiting behavior of the estimator, preventing asymptotic normality and root-n consistency. 

Under the assumptions of Theorem 3, the proposed estimator is asymptotically normal after appropriate scaling and centering. However, the scaling is by the square root of the local sample size $\sqrt{nh}$ rather than the usual parametric rate $\sqrt{n}$. This slower convergence rate is a cost of making fewer assumptions (equivalently, the cost of better efficiency would be less robustness); thus we have a typical bias-variance trade-off. As in standard nonparametric regression, the estimator is consistent but not quite centered at $\theta(a)$; there is a bias term of order $O(h^2)$, denoted $b_h(a)$. In fact the estimator is centered at a smoothed version of the effect curve $\theta^*_h(a)=\bg_{ha}(a)^\T \boldsymbol\beta_h(a)=\theta(a)+b_h(a)$. This phenomenon is ubiquitous in nonparametric regression, and complicates the process of computing confidence bands. It is sometimes assumed that the bias term is $o(1/\sqrt{nh})$ and thus asymptotically negligible (e.g., by assuming $h=o(n^{-1/5})$ so that $nh^5 \rightarrow 0$); this is called undersmoothing and technically allows for the construction of valid confidence bands around $\theta(a)$. However,  there is little guidance about how to actually undersmooth in practice, so it is mostly a technical device. We follow \citet{wasserman2006all} and others by expressing uncertainty about the estimator $\hat\theta_h(a)$ using confidence intervals centered at the smoothed data-dependent parameter $\theta^*_h(a)$. For example, under the conditions of Theorem 3, pointwise Wald 95\% confidence intervals can be constructed with $\hat\theta_h(a) \pm 1.96 \hat\sigma/\sqrt{n}$, where $\hat\sigma^2$ is the $(1,1)$ element of the sandwich variance estimate $\Pn\{ \boldsymbol{\hat\varphi}_{ha}(\bZ)^{\otimes 2}\}$ based on the estimated efficient influence function for $\boldsymbol\beta_h(a)$ given by
\begin{align*}
\boldsymbol{\hat\varphi}_{ha}(\bZ) = \mathbf{\hat{D}}_{ha}^{-1} \bigg[ \bg_{ha}(A) &K_{ha}(A) \Big\{ \hat\xi(\bZ;\hat\pi,\hat\mu) - \bg_{ha}(A)^\T \boldsymbol{\hat\beta}_h(a) \Big\} \\
&+ \int_{\mathcal{A}} \bg_{ha}(t) K_{ha}(t) \Big\{ \hat\mu(\bL,t) - \hat{m}(t) \Big\} \hat\varpi(t) \ dt \bigg]
\end{align*}
for $\mathbf{\hat{D}}_{ha}=\Pn\{\bg_{ha}(A) K_{ha}(A) \bg_{ha}^\T \}$, $\hat{m}(t)=\Pn\{\hat\mu(\bL,t)\}$, $\hat\varpi(t)=\Pn\{\hat\pi(t \mid \bL)\}$.

\subsection{Data-driven bandwidth selection}

The choice of smoothing parameter is critical for any nonparametric method; too much smoothing yields large biases and too little yields excessive variance. In this subsection we discuss how to use cross-validation to choose the relevant bandwidth parameter $h$. In general the method we propose parallels those used in standard nonparametric regression settings, and can give similar optimality properties.

We can exploit the fact that our method can be cast as a non-standard nonparametric regression problem, and borrow from the wealth of literature on bandwidth selection there. Specifically, the logic behind Theorem 3 (i.e., that nuisance function estimation can be asymptotically irrelevant) can be adapted to the bandwidth selection setting, by treating the pseudo-outcome $\xi(\bZ;\hat\pi,\hat\mu)$ as known and using for example the bandwidth selection framework from \citet{hardle1988far}. These authors proposed a unified selection approach that includes generalized cross-validation, Akaike's information criterion, and leave-one-out cross-validation as special cases, and showed the asymptotic equivalence and optimality of such approaches. In our setting, leave-one-out cross-validation is attractive because of its computational ease. The simplest analog of leave-one-out cross-validation for our problem would be to select the optimal bandwidth from some set $\mathcal{H}$ with
$$ \hat{h}_{opt} = \argmin_{h \in \mathcal{H}} \ \sum_{i=1}^n  \left\{ \frac{ \hat\xi(\bZ_i;\hat\pi,\hat\mu)  - \hat\theta_h(A_i) }{1-\hat{W}_h(A_i)} \right\}^2, $$
where $\hat{W}_h(a_i)= (1,0) \Pn\{ \bg_{ha_i}(A) K_{ha_i}(A) \bg_{ha_i}(A)^\T \}^{-1} (1,0)^\T h^{-1} K(0)$  
is the $i^{th}$ diagonal of the so-called smoothing or hat matrix. The properties of this approach can be derived using logic similar to that in Theorem 3, e.g., by adapting results from \citet{li2004cross}. Alternatively one could split the sample, estimate the nuisance functions in one half, and then do leave-one-out cross-validation in the other half, treating the pseudo-outcomes estimated in the other half as known. We expect these approaches to be asymptotically equivalent to an oracle selector.

An alternative option would be to use the $k$-fold cross-validation approach of \citet{van2003cross} or \citet{diaz2013targeted}. This would entail randomly splitting the data into $k$ parts, estimating the nuisance functions and the effect curve on $(k-1)$ training folds, using these estimates to compute measures of risk on the $k^{th}$ test fold, and then repeating across all $k$ folds and averaging the risk estimates. One would then repeat this process for each bandwidth value $h$ in some set $\mathcal{H}$, and pick that which minimized the estimated cross-validated risk. \citet{van2003cross} gave finite-sample and asymptotic results showing that the resulting estimator behaves similarly to an oracle estimator that minimizes the true unknown cross-validated risk. Unfortunately this cross-validation process can be more computationally intensive than the above leave-one-out method, especially if the nuisance functions are estimated with flexible computation-heavy methods. However this approach will be crucial when incorporating general machine learning and moving beyond linear kernel smoothers.


\section{Simulation study}

We used simulation to examine the finite-sample properties of our proposed methods. Specifically we simulated from a model with normally distributed covariates
$$ \bL=(L_1,...,L_4)^\T \sim N(0,\mathbf{I}_4) , $$
Beta distributed exposure
\begin{equation*}
\begin{gathered}
(A/20) \mid \bL \sim \text{Beta}\{ \lambda(\bL), 1-\lambda(\bL)\} , \\
\text{logit} \ \lambda(\bL) = -0.8+0.1 {L}_1 + 0.1 {L}_2 - 0.1 {L}_3 + 0.2 {L}_4 , \\
\end{gathered}
\end{equation*}
and a binary outcome 
\begin{equation*}
\begin{gathered}
Y \mid \bL, A \sim \text{Bernoulli}\{\mu(\bL,A)\} , \\
\text{logit} \ \mu(\bL,A)=  1 + (0.2,0.2,0.3,-0.1) \bL + A (0.1 - 0.1 {L}_1 + 0.1 {L}_3 - 0.13^2 A^2) .
\end{gathered}
\end{equation*}
The above setup roughly matches the data example from the next section. Figure \ref{fig:simplot} shows a plot of the effect curve $\theta(a)=\E\{\mu(\bL,a)\}$ induced by the simulation setup, along with treatment versus outcome data for one simulated dataset (with $n=1000$).

\begin{figure}[h!]
\centering
\includegraphics[width=0.8\textwidth]{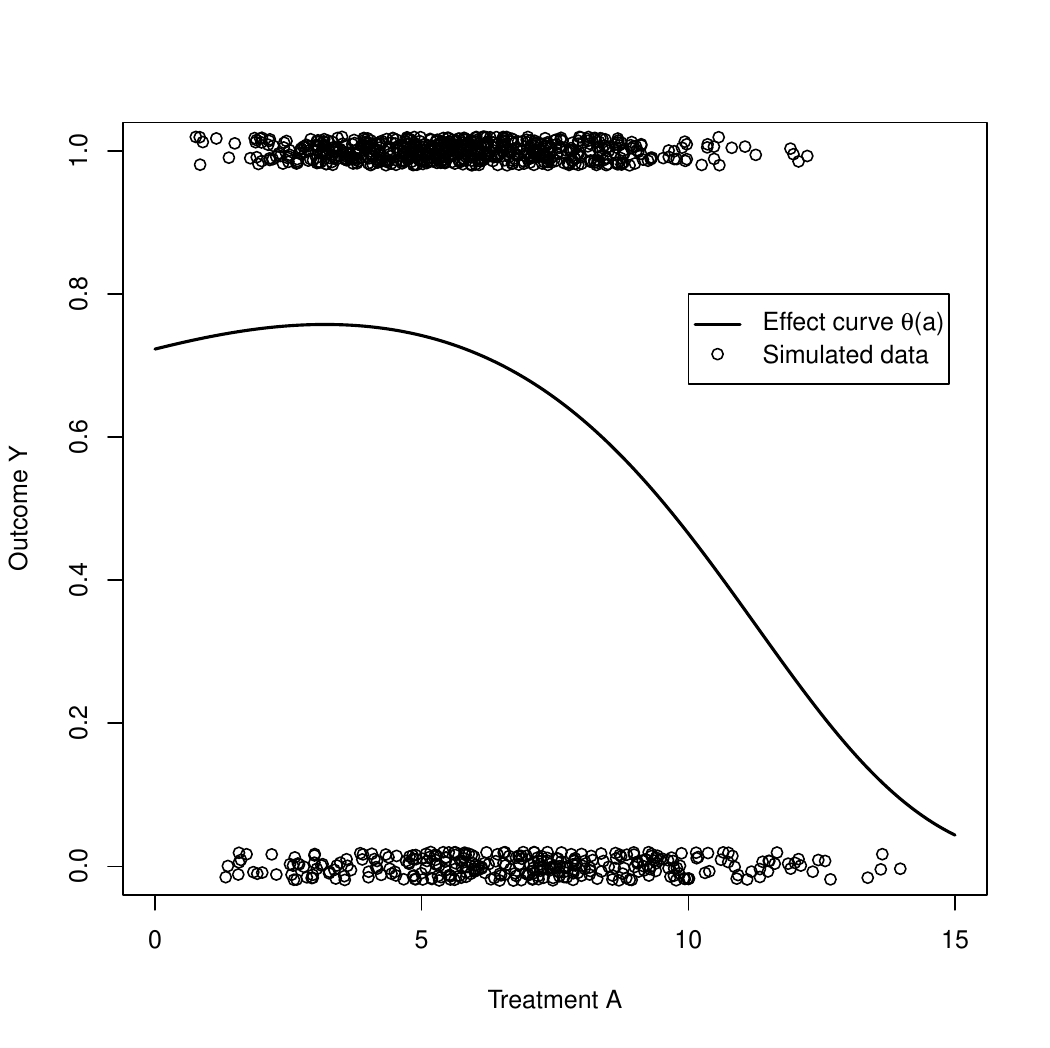}
\caption{Plot of effect curve induced by simulation setup, with treatment and outcome data from one simulated dataset with $n=1000$.}
\label{fig:simplot}
\end{figure}

To analyze the simulated data we used three different estimators: a marginalized regression (plug-in) estimator $\hat{m}(a)=\Pn\{\hat\mu(\bL,a)\}$, and two different versions of the proposed local linear kernel estimator. Specifically we used an inverse-probability-weighted approach first developed by \citet{rubin2006extending}, which relies solely on a treatment model estimator $\hat\pi$ (equivalent to setting $\hat\mu=0$), and the standard doubly robust version that used both estimators $\hat\pi$ and $\hat\mu$. To model the conditional treatment density $\pi$ we used logistic regression to estimate the parameters of the mean function $\lambda(\bl)$; we separately considered correctly specifying this mean function, and then also misspecifying the mean function by transforming the covariates with the same covariate transformations as in \citet{kang2007demystifying}. To estimate the outcome model $\mu$ we again used logistic regression, considering a correctly specified model and then a misspecified model that used the same transformed covariates as with $\pi$ and also left out the cubic term in $a$ (but kept all other interactions). To select the bandwidth we used the leave-one-out approach proposed in Section 3.5, which treats the pseudo-outcomes as known. For comparison we also considered an oracle approach that picked the bandwidth by minimizing the oracle risk $\Pn[\{\theta(A)-\hat\theta_h(A)\}^2]$. In both cases we found the minimum bandwidth value in the range $\mathcal{H} = [0.01, 50]$ using numerical optimization.

We generated 500 simulated datasets for each of three sample sizes, $n=100$, $1000$, and $10000$. To assess the quality of the estimates across simulations we calculated empirical bias and root mean squared error at each point, and integrated across the function with respect to the density of $A$. Specifically, letting $\hat\theta_s(a)$ denote the estimated curve at point $a$ in simulation $s$ ($s=1,...,S$ with $S=500$), we estimated the integrated absolute mean bias and root mean squared error with 
\begin{equation*}
\begin{gathered}
\widehat{\text{Bias}} = \int_{\mathcal{A}^*} \Big| \frac{1}{S} \sum_{s=1}^S \hat\theta_s(a)-\theta(a) \Big| \varpi(a) \ da , \\
\widehat{\text{RMSE}} = \int_{\mathcal{A}^*} \left[ \frac{1}{S} \sum_{s=1}^S \{ \hat\theta_s(a) - \theta(a) \}^2 \right]^{1/2} \!\! \varpi(a) \ da . 
\end{gathered}
\end{equation*}
In the above $\mathcal{A}^*$ denotes a trimmed version of the support of $A$, excluding 10\% of mass at the boundaries. Note that the above integrands (except for the density) correspond to the usual definitions of absolute mean bias and root mean squared error for estimation of a single scalar parameter (e.g., the curve at a single point).

\begin{table}
\caption{\label{tab02}Integrated bias and root mean squared error (500 simulations)}
\centering
\fbox{ %
\begin{tabular}{ll r r r r}
& & \multicolumn{4}{c}{\em Bias (RMSE) when correct model is:} \\
\em$n$ &\em Method & \em Neither & \em Treatment & \em Outcome & \em Both \\
\hline
$100$ 
	& Reg  & 2.67 (5.54) 		& 2.67 (5.54) 	& 0.62 (5.25) 	& 0.62 (5.25) \\
	& IPW & 2.26 (8.49) 		& 1.64 (8.57) 	& 2.26 (8.49) 	& 1.64 (8.57) \\ 
	& IPW* & 2.26 (7.36) 	& 1.58 (7.37) 	& 2.26 (7.36) 	& 1.58 (7.37) \\
	& DR   & 2.23 (6.27) 		& 1.01 (6.28) 	& 1.12 (5.92) 	& 1.10 (6.50) \\
	& DR* & 2.12 (5.48)		& 1.00 (5.36) 	& 1.03 (5.08) 	& 1.02 (5.65) \\ 
& & \\
$1 000$ 
	& Reg  & 2.62 (3.07) 		& 2.62 (3.07) 	& 0.06 (1.53) 	& 0.06 (1.53) \\
	& IPW & 2.38 (3.97) 		& 0.86 (2.94) 	& 2.38 (3.97) 	& 0.86 (2.94) \\ 
	& IPW* & 2.11 (3.44) 	& 0.70 (2.34) 	& 2.11 (3.44) 	& 0.70 (2.34) \\ 
	& DR   & 2.03 (3.11) 		& 0.75 (2.39) 	& 0.74 (2.53) 	& 0.68 (2.25) \\
	& DR* & 1.84 (2.67) 		& 0.64 (1.88) 	& 0.61 (1.78) 	& 0.58 (1.78) \\ 
& & \\
$10 000$ 
	& Reg  & 2.65 (2.70) 		& 2.65 (2.70) 	& 0.02 (0.47) 	& 0.02 (0.47) \\
	& IPW & 2.36 (3.42) 		& 0.33 (1.09) 	& 2.36 (3.42) 	& 0.33 (1.09) \\ 
	& IPW* & 2.24 (3.28) 	& 0.35 (0.85) 	& 2.24 (3.28) 	& 0.35 (0.85) \\ 
	& DR   & 1.81 (2.35) 		& 0.26 (0.86) 	& 0.20 (1.21) 	& 0.25 (0.78) \\
	& DR* & 1.76 (2.27) 		& 0.31 (0.68) 	& 0.24 (1.10) 	& 0.29 (0.64) \\ 
\hline
Notes: & \multicolumn{5}{p{.5\textwidth}}{Bias / RMSE = integrated mean bias / root mean squared error;  IPW = inverse probability weighted; Reg = regression; DR = doubly robust; * = uses oracle bandwidth.}\\
\end{tabular}}
\end{table}

The simulation results are given in Table 1 (both the integrated bias and root mean squared error are multiplied by 100 for easier interpretation). Estimators with stars (e.g., IPW*) denote those with bandwidths selected using the oracle risk. When both $\hat\pi$ and $\hat\mu$ were misspecified, all estimators gave substantial integrated bias and mean squared error (although the doubly robust estimator consistently performed better than the other estimators in this setting). Similarly, all estimators had relatively large mean squared error in the small sample size setting ($n=100$) due to lack of precision, although differences in bias were similar to those at moderate and small sample sizes ($n=1000,10000$). Specifically the regression estimator gave small bias when $\hat\mu$ was correct and large bias when $\hat\mu$ was misspecified, while the inverse-probability-weighted estimator gave small bias when $\hat\pi$ was correct and large bias when $\hat\pi$ was misspecified. However, the doubly robust estimator gave small bias as long as either $\hat\pi$ or $\hat\mu$ was correctly specified, even if one was misspecified. 

The inverse-probability-weighted estimator was least precise, although it had smaller mean squared error than the misspecified regression estimator for moderate and large sample sizes. The doubly robust estimator was roughly similar to the inverse-probability-weighted estimator when the treatment model was correct, but gave less bias and was more precise, and was similar to the regression estimator when the outcome model was correct (but slightly more biased and less precise). In general the estimators based on the oracle-selected bandwidth were similar to those using the simple leave-one-out approach, but gave marginally less bias and mean squared error for small and moderate sample sizes. The benefits of the oracle bandwidth were relatively diminished with larger sample sizes.

\section{Application}

In this section we apply the proposed methodology to estimate the effect of nurse staffing on hospital readmissions penalties, as discussed in the Introduction. In the original paper, \citet{mchugh2013hospitals} used a matching approach to control for hospital differences, and found that hospitals with more nurse staffing were less likely to be penalized; this suggests increasing nurse staffing to help curb excess readmissions. However, their analysis only considered the effect of higher nurse staffing versus lower nurse staffing, and did not explore the full effect curve; it also relied solely on matching for covariate adjustment, i.e., was not doubly robust. 

In this analysis we use the proposed kernel smoothing approach to estimate the full effect curve flexibly, while also allowing for doubly robust covariate adjustment. We use the same data on 2976 acute care hospitals as in \citet{mchugh2013hospitals}; full details are given in the original paper. The covariates $\bL$ include hospital size, teaching intensity, not-for-profit status, urban versus rural location, patient race proportions, proportion of patients on Medicaid, average socioeconomic status, operating margins, a measure of market competition, and whether open heart or organ transplant surgery is performed. The treatment $A$ is nurse staffing hours, measured as the ratio of registered nurse hours to adjusted patient days, and the outcome $Y$ indicates whether the hospital was penalized due to excess readmissions. Excess readmissions are calculated by the Centers for Medicare \& Medicaid Services and aim to adjust for the fact that different hospitals see different patient populations. Without unmeasured confounding, the quantity $\theta(a)$ represents the proportion of hospitals that would have been penalized had all hospitals changed their nurse staffing hours to level $a$. Otherwise $\theta(a)$ can be viewed as an adjusted measure of the relationship between nurse staffing and readmissions penalties.

For the conditional density $\pi(a \mid \bl)$ we assumed a model $A=\lambda(\bL) + \gamma(\bL)\varepsilon$, where $\varepsilon$ has mean zero and unit variance given the covariates, but otherwise has an unspecified density. We flexibly estimated the conditional mean function $\lambda(\bl)=\E(A \mid \bL=\bl)$ and variance function $\gamma(\bl)=\var(A \mid \bL=\bl)$ by combining linear regression, multivariate adaptive regression splines, generalized additive models, Lasso, and boosting, using the cross-validation-based Super Learner algorithm \citep{van2007super}, in order to reduce chances of model misspecification. A standard kernel approach was used to estimate the density of $\varepsilon$. 

For the outcome regression $\mu(\bl,a)$ we again used the Super Learner approach, combining logistic regression, multivariate adaptive regression splines, generalized additive models, Lasso, and boosting. To select the bandwidth parameter $h$ we used the leave-one-out approach discussed in Section 3.5. We considered regression, inverse-probability-weighted, and doubly robust estimators, as in the simulation study in Section 4. The two hospitals ($<$0.1\%) with smallest inverse-probability weights were removed as outliers. For the doubly robust estimator we also computed pointwise uncertainty intervals (i.e., confidence intervals around the smoothed parameter $\theta^*_h(a)$; see Section 3.4) using a Wald approach based on the empirical variance of the estimating function values. 

\begin{figure}[h!]
\centering
\includegraphics[width=0.8\textwidth]{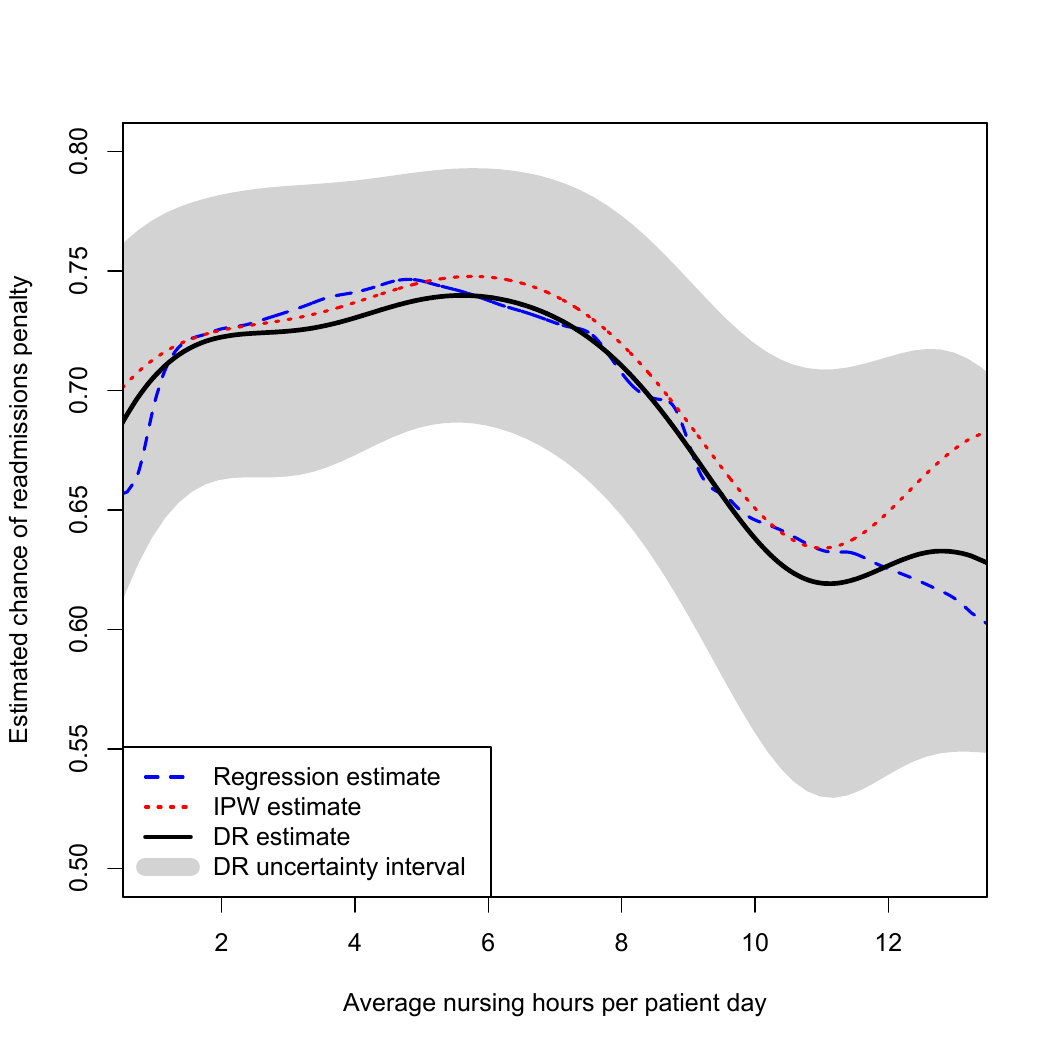}
\caption{Estimated effects of nurse staffing on readmissions penalties.}
\label{fig:resplot}
\end{figure}

A plot showing the results from the three estimators (with uncertainty intervals for the proposed doubly robust estimator) is given in Figure \ref{fig:resplot}. In general the three estimators were very similar. For less than 5 average nurse staffing hours the adjusted chance of penalization was estimated to be roughly constant, around 73\%, but at 5 hours chances of penalization began decreasing, reaching approximately 60\% when nurse staffing reached 11 hours. This suggests that adding nurse staffing hours may be particularly beneficial in the 5-10 hour range, in terms of reducing risk of readmissions penalization; most hospitals (65\%) lie in this range in our data.

\section{Discussion}

In this paper we developed a novel approach for estimating the average effect of a continuous treatment; importantly the approach allows for flexible doubly robust covariate adjustment without requiring any parametric assumptions about the form of the effect curve, and can incorporate general machine learning and nonparametric regression methods. We presented a novel efficient influence function for a stochastic intervention parameter defined within a nonparametric model; this influence function motivated the proposed approach, but may also be useful to estimate on its own. In addition we provided asymptotic results (including rates of convergence and asymptotic normality) for a particular kernel estimation version of our method, which only require the effect curve to be twice continuously differentiable, and allows for flexible data-adaptive estimation of nuisance functions. These results show the double robustness of the proposed approach, since either a conditional treatment density or outcome regression model can be misspecified and the proposed estimator will still be consistent as long as one such nuisance function is correctly specified. We also showed how double robustness can result in smaller second-order bias even when both nuisance functions are consistently estimated. Finally, we proposed a simple and fast data-driven cross-validation approach for bandwidth selection, found favorable finite sample properties of our proposed approach in a simulation study, and applied the kernel estimator to examine the effects of hospital nurse staffing on excess readmissions penalty.

This paper integrates semiparametric (doubly robust) causal inference with nonparametric function estimation and machine learning, helping to bridge the ``huge gap between classical semiparametric models and the model in which nothing is assumed'' \citep{van2014higher}. In particular our work extends standard nonparametric regression by allowing for complex covariate adjustment and doubly robust estimation, and extends standard doubly robust causal inference methods by allowing for nonparametric smoothing. Many interesting problems arise in this gap between standard nonparametric and semiparametric inference, leading to many opportunities for important future work, especially for complex non-regular target parameters that are not pathwise differentiable. In the context of this paper, in future work it will be useful to study uniform distributional properties of our proposed estimator (e.g., weak convergence), as well as its role in testing and inference (e.g., for constructing tests that have power to detect a wide array of deviations from the null hypothesis of no effect of a continuous treatment).

\section{Acknowledgements}

Edward Kennedy was supported by NIH grant R01-DK090385, Zongming Ma by NSF CAREER award DMS-1352060, and Dylan Small by NSF grant SES-1260782. The authors thank Marshall Joffe and Alexander Luedtke for helpful discussions, and two referees for very insightful comments and suggestions. 

\bibliography{bibliography}
\bibliographystyle{rss}


\pagebreak

\begin{figure}[h!]
\centering
\includegraphics[width=0.8\textwidth]{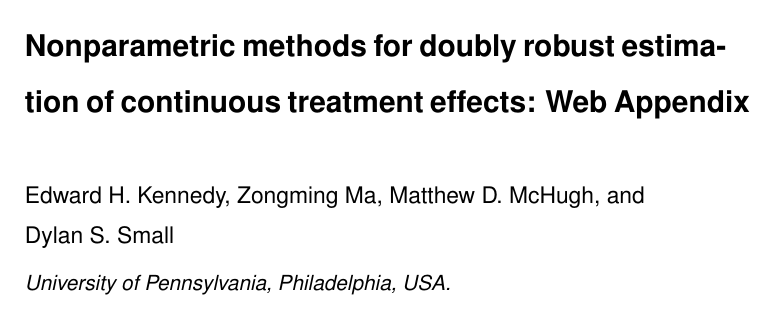}
\end{figure}

\setcounter{page}{1}
\thispagestyle{empty}
\setcounter{section}{0}

\section{Guide to notation}

\noindent $\bZ=(\bL,A,Y)$ = observed data arising from distribution $P$ with density  $p(\bz)=p(y \mid \bl,a) p(a \mid \bl) p(\bl)$ and support $\text{supp}(\bZ)=\mathcal{Z} = \mathcal{L} \times \mathcal{A} \times \mathcal{Y}$ \\

\noindent $\Pn = \frac{1}{n} \sum_i \delta_{\bZ_i}$ = empirical measure so that $\Pn(f) = \Pn\{ f(\bZ)\} = \frac{1}{n} \sum_i f(\bz_i)$ \\

\noindent $\Pb(f) = \Pb\{f(\bZ)\} = \int_\mathcal{Z} f(\bz) \ dP(\bz)$ = expectation for new $\bZ$ treating $f$ as fixed (so $\Pb(\hat{f})$ is random if $\hat{f}$ depends on sample, in which case $\Pb(\hat{f}) \neq \E(\hat{f})$) \\

\noindent $\pi(a \mid \bl) = p(a \mid \bl) = \frac{\partial}{\partial a} P(A \leq a \mid \bl)$ = conditional density of treatment $A$ \\

\noindent $\hat\pi(a \mid \bl)$ = user-specified estimator of $\pi(a \mid \bl)$, which converges to limit $\overline\pi(a \mid \bl)$ that may not equal true $\pi$ \\

\noindent $\varpi(a) = p(a) = \frac{\partial}{\partial a} P(A \leq a) = \E\{\pi(a \mid \bL)\} = \int_\mathcal{L} \pi(a \mid \bl) \ dP(\bl)$ = density of $A$ \\

\noindent $\hat\varpi(a) = \Pn\{ \hat\pi(a \mid \bL) \} = \int_\mathcal{L} \hat\pi(a \mid \bl) \ d\Pn(\bl) = \frac{1}{n} \sum_i \hat\pi(a \mid \bl_i)$ = estimator of $\varpi$, which converges to limit $\overline\varpi(a)$ that may not equal true $\varpi$ \\

\noindent $\mu(\bl,a) = \E(Y \mid \bL=\bl, A=a) = \int_\mathcal{Y} y \ dP(y \mid \bl,a)$ = conditional mean outcome \\

\noindent $\hat\mu(\bl,a)$ = user-specified estimator of $\mu(\bl,a)$, which converges to limit $\overline\mu(\bl,a)$ that may not equal true $\mu$ \\

\noindent $\hat{m}(a) = \Pn\{\hat\mu(\bL,a)\} = \int_\mathcal{L} \hat\mu(\bl,a) \ d\Pn(\bl) = \frac{1}{n} \sum_i \hat\mu(\bl_i,a)$ = regression-based plug-in estimator of $\theta(a)$, which converges to limit $\overline{m}(a)$ that may not equal true $\theta$

\section{Proof of Theorem 1}

Let $p(\bz;\epsilon)$ be a parametric submodel with parameter $\epsilon \in \R$ and $p(\bz;0)=p(\bz)$, for example $p(\bz;\epsilon) = \{1+\epsilon b(\bz)\} p(\bz)$ where $\E\{b(\bZ)\}=0$ with $|b(\bZ)| < B$ and $|\epsilon| \leq (1/B)$ to ensure that $p(\bz;\epsilon)\geq0$. For notational simplicity we denote $\{\partial f(\mathbf{t};\epsilon)/\partial \epsilon\}|_{\epsilon=0}$ by $f'_\epsilon(\mathbf{t};0)$ for any general function $f$ of $\epsilon$ and other arguments $\mathbf{t}$. 

By definition the efficient influence function for $\psi$ is the unique function $\phi(\bZ)$ that satisfies $\psi'_\epsilon(0)= \E\{ \phi(\bZ) \ell'_\epsilon(\bZ;0) \}$, where $\psi(\epsilon)$ represents the parameter of interest as a functional on the parametric submodel and $\ell(\mathbf{w} \mid \mathbf{\overline{w}}; \epsilon)=\log p(\mathbf{w} \mid \mathbf{\overline{w}};\epsilon)$ for any partition $(\mathbf{W},\overline{\mathbf{W}}) \subseteq \bZ$. Therefore
$$ \ell'_\epsilon(\bz;\epsilon) = \ell'_\epsilon(y \mid \bl,a;\epsilon) + \ell'_\epsilon(a \mid \bl;\epsilon) + \ell'_\epsilon(\bl;\epsilon) .$$

We give two important properties of such score functions $\ell'_\epsilon(\mathbf{w} \mid \mathbf{\overline{w}}; \epsilon)$ that will be used throughout this proof. First note that since $\ell(\mathbf{w} \mid \mathbf{\overline{w}}; \epsilon)$ is a log transformation of $p(\mathbf{w} \mid \mathbf{\overline{w}};\epsilon)$, it follows that $\ell'_\epsilon(\mathbf{w} \mid \mathbf{\overline{w}}; \epsilon)=p'_\epsilon(\mathbf{w} \mid \mathbf{\overline{w}};\epsilon) /p(\mathbf{w} \mid \mathbf{\overline{w}};\epsilon)$ because for general functions $f$ we have $\partial \log f(\epsilon)/\partial \epsilon=\{\partial f(\epsilon)/\partial \epsilon\}/f(\epsilon)$. Similarly, as with any score function, note that $\E\{\ell'_\epsilon(\mathbf{W} \mid \mathbf{\overline{W}}; 0) \mid \mathbf{\overline{W}}\}=0$ since
\begin{align*}
\int_\mathbf{\mathcal{W}} \ell'_\epsilon(\mathbf{w} \mid \mathbf{\overline{w}}; 0) \ dP(\mathbf{w} \mid \mathbf{\overline{w}}) = \int_\mathbf{\mathcal{W}} dP'_\epsilon(\mathbf{w} \mid \mathbf{\overline{w}}) = \frac{\partial}{\partial \epsilon} \int_\mathbf{\mathcal{W}} dP(\mathbf{w} \mid \mathbf{\overline{w}}) = 0 .
\end{align*}

Our goal in this proof is to show that $\psi'_\epsilon(0)= \E\{ \phi(\bZ) \ell'_\epsilon(\bZ;0) \}$ for the proposed influence function $\phi(\bZ)=\xi(\bZ;\pi,\mu)-\psi+\int_\mathcal{A} \{ \mu(\bL,a) - \int_\mathcal{L} \mu(\bl,a) dP(\bl) \} \varpi(a) da$ given in the main text. First we will give an expression for $\psi'_\epsilon(0)$. By definition $ \psi(\epsilon) = \int_\mathcal{A} \theta(a; \epsilon) \varpi(a; \epsilon) \ da $, so
\begin{align*}
\psi'_\epsilon(0) &= \int_\mathcal{A} \{ \theta'_\epsilon(a; 0) \varpi(a) + \theta(a) \varpi'_\epsilon(a;0) \} \ da  
= \E\{ \theta'_\epsilon(A; 0)  + \theta(A) \ell'_\epsilon(A;0) \} .
\end{align*}
Also since $\theta(a;\epsilon) = \int_\mathcal{L} \int_\mathcal{Y} y \ p(y \mid \bl,a; \epsilon) p(\bl;\epsilon) \ d\eta(y) \ d\nu(\bl)$, we have
\begin{align*}
\theta'_\epsilon(a; 0) &= \int_\mathcal{L} \int_\mathcal{Y} y \Big\{ p'_\epsilon(y \mid \bl,a; 0) p(\bl) + p(y \mid \bl,a) p'_\epsilon(\bl;0) \Big\} d\eta(y) \ d\nu(\bl) \\
&= \int_\mathcal{L} \int_\mathcal{Y} y \Big\{ \ell'_\epsilon(y \mid \bl,a; 0) p(y \mid \bl,a) p(\bl) + p(y \mid \bl,a) \ell'_\epsilon(\bl;0) p(\bl) \Big\} d\eta(y) \ d\nu(\bl) \\
&=\E\Big[ \E\{ Y \ell'_\epsilon(Y \mid \bL,A; 0) \mid \bL, A=a \} \Big] + \E\Big\{ \mu(\bL,a) \ell'_\epsilon(\bL;0)\Big\} .
\end{align*}
Therefore
\begin{align*}
\psi'_\epsilon(0)  = \int_\mathcal{A} \bigg( \E\Big[ &\E\{ Y \ell'_\epsilon(Y \mid \bL,A; 0) \mid \bL, A=a \} \Big] \\
&+ \E\Big\{ \mu(\bL,a) \ell'_\epsilon(\bL;0)\Big\}  + \theta(a) \ell'_\epsilon(a;0) \bigg) \varpi(a) \ da .
\end{align*}

Now we will consider the covariance 
$$\E\{ \phi(\bZ) \ell'_\epsilon(\bZ;0) \}=\E\Big[ \phi(\bZ) \Big\{ \ell'_\epsilon(Y \mid \bL,A;0) + \ell'_\epsilon(A ,\bL;0) \Big\} \Big],$$
which we need to show equals the earlier expression for $\psi'_\epsilon(0)$.

Recall the proposed efficient influence function given in the main text is
\begin{align*}
\frac{Y - \mu(\bL,A)}{\pi(A \mid \bL)} \varpi(A) + m(A) - \psi + \int_\mathcal{A} \Big\{ \mu(\bL,a) - m(a) \Big\} \varpi(a) \ da
\end{align*}
where we define
$$ m(a) = \int_\mathcal{L} \mu(\bl,a) \ dP(\bl) $$
as the marginalized version of the regression function $\mu$, so that $m(a)=\theta(a)$ if $\mu$ is the true regression function.

Thus $\E\{ \phi(\bZ) \ell'_\epsilon(Y \mid \bL,A;0) \}$ equals
\begin{align*}
\E\bigg( \bigg[ &\frac{Y - \mu(\bL,A)}{\pi(A \mid \bL)/ \varpi(A)} + \int_\mathcal{A} \Big\{ \mu(\bL,a) - \theta(a) \Big\} \varpi(a) \ da + \theta(A) - \psi  \bigg] \ell'_\epsilon(Y \mid \bL,A;0)\bigg) \\
&= \E\bigg\{ \frac{Y \ell'_\epsilon(Y \mid \bL,A;0)}{\pi(A \mid \bL)/\varpi(A)} \bigg\} = \E\bigg[ \frac{\E\{Y \ell'_\epsilon(Y \mid \bL,A;0) \mid \bL,A\}}{\pi(A \mid \bL)/\varpi(A)} \bigg] \\
&= \int_\mathcal{A} \E\Big[\E\{Y \ell'_\epsilon(Y \mid \bL,A;0) \mid \bL,A=a\}\Big] \varpi(a) \ da
\end{align*}
where the first equality follows since $\E\{ \ell'_\epsilon(Y \mid \bL,A;0) \mid \bL, A\}=0$, the second by iterated expectation conditioning on $\bL$ and $A$, and the third by iterated expectation conditioning on $\bL$. Now note that $\E\{ \phi(\bZ) \ell'_\epsilon(A ,\bL;0)\}$ equals
\begin{align*}
&\E \bigg[ \bigg\{ \frac{Y - \mu(\bL,A)}{\pi(A \mid \bL)/\varpi(A)} \bigg\} \ell'_\epsilon(A ,\bL;0) + \{ \theta(A) - \psi \} \Big\{ \ell'_\epsilon(\bL \mid A;0) + \ell'_\epsilon(A;0) \Big\} \\
& \hspace{.25in} + \int_\mathcal{A}\Big\{ \mu(\bL,a)-\theta(a)\Big\} \varpi(a) \ da \ \Big\{ \ell'_\epsilon(A \mid \mathbf{L};0) + \ell'_\epsilon(\bL;0) \Big\} \bigg] \\
&=  \E \bigg[ \theta(A) \ell'_\epsilon(A;0) + \int_\mathcal{A} \mu(\bL,a) \ell'_\epsilon(\bL;0) \varpi(a) \ da  \bigg] 
\end{align*}
since by definition $\ell'_\epsilon(A ,\bL;0)=\ell'_\epsilon(A \mid \bL;0)+\ell'_\epsilon(\bL;0)=\ell'_\epsilon(\bL \mid A;0)+\ell'_\epsilon(A;0)$, and the equality used iterated expectation conditioning on $\bL$ and $A$ for the first term in the first line, $A$ for the second term in the first line, and $\bL$ for the second line. Adding the expressions $\E\{ \phi(\bZ) \ell'_\epsilon(Y \mid \bL,A;0) \}$ and $\E\{ \phi(\bZ) \ell'_\epsilon(A ,\bL;0)\}$ gives
$$  \int_\mathcal{A} \bigg( \E\Big[ \E\{ Y \ell'_\epsilon(Y \mid \bL,A; 0) \mid \bL, A=a \} +  \mu(\bL,a) \ell'_\epsilon(\bL;0) \Big]  + \theta(a) \ell'_\epsilon(a;0) \bigg) \varpi(a) \ da  , $$
which equals $\psi'_\epsilon(0)$. Thus $\phi$ is the efficient influence function.

\section{Double robustness of efficient influence function \& mapping}

Here we will show that $\E\{\phi(\bZ;\overline\pi,\overline\mu,\psi)\} = 0$ if either $\overline\pi=\pi$ or $\overline\mu=\mu$, where $\phi(\bZ;\overline\pi,\overline\mu,\psi)$ is the influence function defined as in the main text as
$$\xi(\bZ;\overline\pi,\overline\mu) - \psi  + \int_\mathcal{A} \Big\{ \overline\mu(\bL,a)- \int_\mathcal{L} \overline\mu(\bl,a) \ dP(\bl)\Big\} \int_\mathcal{L} \overline\pi(a \mid \bl) \ dP(\bl) \ da, $$
where
$$ \xi(\bZ;\overline\pi,\overline\mu) = \frac{Y - \overline\mu(\bL,A)}{\overline\pi(A \mid \bL)} \int_\mathcal{L} \overline\pi(A \mid \bl) \ dP(\bl)  +  \int_\mathcal{L} \overline\mu(\bl,A) \ dP(\bl) . $$

First note that, letting $\overline\varpi(a)=\E\{\overline\pi(a \mid \bL)\}$ and $\overline{m}(a)=\E\{\overline\mu(\bL,a)\}$, we have
\begin{align*}
\E\{\xi(\bZ;\overline\pi,\overline\mu) &\mid A=a\} = \E\left\{ \frac{Y - \overline\mu(\bL,A)}{\overline\pi(A \mid \bL)/\overline\varpi(A)}  +  \overline{m}(A) \Bigm| A=a \right\} \\
&= \int_{\mathcal{L}} \frac{\mu(\bl,a) - \overline\mu(\bl,a)}{\overline\pi(a \mid \bl)/\overline\varpi(a)}   \ dP(\bl \mid a)  + \overline{m}(a) \\
&= \int_{\mathcal{L}} \Big\{ \mu(\bl,a) - \overline\mu(\bl,a) \Big\} \frac{\pi(a \mid \bl)/\varpi(a)}{\overline\pi(a \mid \bl)/\overline\varpi(a)}   \ dP(\bl)  + \overline{m}(a) \\
&= \theta(a)  + \int_{\mathcal{L}} \Big\{ \mu(\bl,a) - \overline\mu(\bl,a) \Big\} \left\{ \frac{\pi(a \mid \bl)/\varpi(a)}{\overline\pi(a \mid \bl)/\overline\varpi(a)} -1 \right\}  \ dP(\bl) 
\end{align*}
where the first equality follows by iterated expectation, the second follows since $p(\bl \mid a) = p(a \mid \bl) p(\bl) / p(a)$, and the third by rearranging. The last line shows that $\E\{\xi(\bZ;\overline\pi,\overline\mu) \mid A=a\}=\theta(a)$ as long as either $\overline\pi=\pi$ or $\overline\mu=\mu$, since in either case the remainder is zero.

Therefore if $\overline\pi=\pi$ or $\overline\mu=\mu$ we have
$$  \int_{\mathcal{A}} \E\{ \xi(\bZ;\overline\pi,\overline\mu) \mid A=a \} \varpi(a) \ da - \psi = \int_\mathcal{A} \theta(a) \varpi(a) \ da - \psi = 0 $$
so that
\begin{align*}
\E\{ \phi(\bZ;\overline\pi,\overline\mu,\psi)\} =  \E \left[ \int_\mathcal{A} \Big\{ \overline\mu(\bL,a)-\overline{m}(a) \Big\} \overline\varpi(a) \ da \right] .
\end{align*}

But
\begin{align*}
\E \int_\mathcal{A} \Big\{ \overline\mu(\bL,a)-\overline{m}(a) \Big\} \overline\varpi(a) \ da = \int_\mathcal{A} \Big\{ \overline{m}(a)-\overline{m}(a) \Big\} \overline\varpi(a) \ da = 0
\end{align*}
by definition.

Therefore $\E\{\phi(\bZ;\overline\pi,\overline\mu,\psi)\} = 0$ if either $\overline\pi=\pi$ or $\overline\mu=\mu$.

\section{TMLE version of estimator}

As we note in the main text, the proposed estimator 
$$ \hat\theta_h(a) = \bg_{ha}(a)^\T \Pn\{ \bg_{ha}(A) K_{ha}(A) \bg_{ha}(A)^\T\}^{-1} \Pn\{ \bg_{ha} K_{ha}(A) \hat\xi(\bZ;\hat\pi,\hat\mu)\} $$
is not guaranteed to respect bounds on $Y$, e.g., if $Y \in [0,1]$ is binary. If some observations have very small values of the denominator quantity $\hat\pi(A \mid \bL)/\hat\varpi(A)$ then the estimator could be unstable and may take values outside the range of $Y$. Targeted maximum likelihood or minimum loss-based estimators (TMLEs), developed by \citet{van2006targeted}, help combat this problem (see discussion for example in \citet{van2011targeted} and elsewhere). In this section we present a TMLE that should give better finite-sample performance, for example, when there are near-violations of the positivity assumption.

Our proposed TMLE can be fit as follows. First estimate the nuisance functions $\hat\pi$ and $\hat\mu$, for example with flexible machine learning (e.g., Super Learner). Then fit a logistic regression model regressing $Y$ on `clever covariate' vector
$$ \mathbf{\hat{c}}_{ha}(\bL,A)= \frac{\bg_{ha}(A) K_{ha}(A)}{\hat\pi(A \mid \bL) / \hat\varpi(A)} $$
with $\text{logit}\{\hat\mu(\bL,A)\}$ included as an offset (and no intercept term). This ensures
$$ \Pn\bigg\{ \frac{\bg_{ha}(A) K_{ha}(A)}{\hat\pi(A \mid \bL)/\hat\varpi(A)} \bigg( Y - \text{expit} \Big[ \text{logit}\{\hat\mu(\bL,A)\} + \boldsymbol{\hat\epsilon}^\T \mathbf{\hat{c}}_{ha}(\bL,A) \Big] \bigg) \bigg\} = 0 $$
where $\boldsymbol\epsilon=(\epsilon_1,\epsilon_2)$ are the parameters in the logistic regression fit. Now define
$$ \hat\mu_{ha}^*(\bL,A) = \text{expit} \Big[ \text{logit}\{\hat\mu(\bL,A)\} + \boldsymbol{\hat\epsilon}^\T \mathbf{\hat{c}}_{ha}(\bL,A) \Big] . $$
Then the proposed method proceeds as before, simply replacing predicted values $\hat\mu(\bL,A)$ with $\hat\mu_{ha}^*(\bL,A)$. Specifically we estimate $\theta(a)$ with
$$ \hat\theta^*_h(a) = \bg_{ha}(a)^\T \Pn\{ \bg_{ha}(A) K_{ha}(A) \bg_{ha}(A)^\T\}^{-1} \Pn\{ \bg_{ha} K_{ha}(A) \hat{m}_{ha}^*(A) \} , $$
$$ \hat{m}_{ha}^*(t) = \Pn \{ \hat\mu_{ha}^*(\bL,t) \} = \Pn \Big( \text{expit} \Big[ \text{logit}\{\hat\mu(\bL,t)\} + \boldsymbol{\hat\epsilon}^\T \mathbf{\hat{c}}_{ha}(\bL,t) \Big] \Big) . $$
The above TMLE is somewhat more complicated to fit than the estimator proposed in the main text. An alternative approach that would also respect bounds on $Y$ would be to estimate $\theta(a)$ with $\hat\theta_h(a)=\text{expit}\{\bg_{ha}(a)^\T \boldsymbol{\hat\beta}_h(a) \}$ where
$$ \boldsymbol{\hat\beta}_h(a) = \argmin_{\boldsymbol\beta \in \R^2} \ \Pn\left( K_{ha}(A) \Big[ \hat\xi(\bZ;\hat\pi,\hat\mu)  - \text{\expit}\{ \bg_{ha}(A)^\T \boldsymbol\beta \} \Big]^2 \right) . $$
Another simple option would be to use the original estimator from the main text and project onto the range of possible $Y$ values.

\section{Stochastic equicontinuity lemmas}

In this section we discuss the concept of asymptotic or stochastic equicontinuity, and give two lemmas that play a central role in subsequent proofs.

Let $\Gn=\sqrt{n}(\Pn-\Pb)$. A sequence of empirical processes $\{ \Gn V_n(f) : f \in \mathcal{F}\}$ indexed by elements $f$ ranging over a metric space $\mathcal{F}$ (equipped with semimetric $\rho$) is stochastically equicontinuous \citep{pollard1984convergence,andrews1994empirical,van1996weak} if for every $\varepsilon>0$ and $\zeta>0$ there exists a $\delta>0$ such that
$$ \limsup_{n \rightarrow \infty} P\bigg( \sup_{\rho(f_1,f_2)<\delta} | \Gn V_n(f_1) - \Gn V_n(f_2) | > \varepsilon \bigg) < \zeta . $$
An important consequence of stochastic equicontinuity for our purposes is that if  $\{\Gn V_n(\cdot) : n \geq 1 \}$ is stochastically equicontinuous then $\rho(\hat{f},\overline{f}) = o_p(1)$ implies that $\Gn\{ V_n(\hat{f}) - V_n(\overline{f})\} = o_p(1)$ \citep{pollard1984convergence,andrews1994empirical}.

Before presenting relevant lemmas, we first need to introduce some notation. Let $F$ denote an envelope function for the space $\mathcal{F}$, i.e., a function with $F(\bz)\geq  |f(\bz) |$ for every $f \in \mathcal{F}$ and $\bz \in \mathcal{Z}$. Also let $N(\varepsilon,\mathcal{F},||\cdot||)$ denote the covering number, i.e., the minimal number of $\varepsilon$-balls (using distance $||\cdot||$) needed to cover $\mathcal{F}$, and let 
$$J(\delta,\mathcal{F},L_2)=\int_0^\delta \sup_Q  \sqrt{\log N(\varepsilon ||F||_{Q,2}, \mathcal{F},L_2(Q))} \ d\varepsilon , $$ 
where $L_2(Q)$ denotes the usual $L_2$ semimetric under distribution $Q$, which for any $f$ is $|| f ||_{Q,2}=(\int f^2 dQ)^{1/2}$. We call $J(\infty,\mathcal{F},L_2)$ the uniform entropy integral. 

To show that a sequence of processes $\{\Gn V_n(\cdot) : n \geq 1 \}$ as defined above is stochastically equicontinuous, one can use Theorem 2.11.1 from \citet{van1996weak}. (Note that in their notation $Z_n(f)=(1/\sqrt{n}) V_n(f)$.) Specifically, Theorem 2.11.1 states that stochastic equicontinuity follows from the following two Lindeberg conditions (conditions 1 and 2), with an additional restriction on the complexity of the space $\mathcal{F}$ (condition 3):
\begin{enumerate}
\item[(1)] $\E\{ ||V_n||_\mathcal{F}^2 \ I(||V_n||_\mathcal{F} > \varepsilon \sqrt{n})\} \rightarrow 0$ for every $\varepsilon>0$.
\item[(2)] $\sup_{\rho(f_1,f_2)<\delta_n} \E[ \{V_n(f_1)-V_n(f_2)\}^2] \rightarrow 0$ for every sequence $\delta_n \rightarrow 0$.
\item[(3)] $\int_0^{\delta_n} \sqrt{\log N(\varepsilon, \mathcal{F}, L_2(\Pn))} \ d\varepsilon \inprob 0$ for every sequence $\delta_n \rightarrow 0$.
\end{enumerate}

We will give conditions under which two particular kinds of sequences of empirical processes are stochastically equicontinuous. Specifically we consider processes $\{\Gn V_n(\cdot) : n \geq 1 \}$ where
\begin{align*}
V_n(f) &= \sqrt{h} \ g_{ha}(A) K_{ha}(A) f(\bZ) , \\
V_n(f) &= \int f(\bL,t) g_{ha}(t) K_{ha}(t) \ dt ,  
\end{align*}
with $g_{ha}(t)$ and $K_{ha}(t)$ defined earlier (note $V_n$ depends on $n$ since $h=h_n$ does).

\begin{lemma}
Consider the sequence of processes $\{\Gn V_{n,j}(\cdot) : n \geq 1 \}$ with
$$ V_{n,j}(f) = \sqrt{h} \left(\frac{A-a}{h}\right)^{j-1} \frac{1}{h} K\!\left(\frac{A-a}{h}\right) f(\bZ) \ , \ \ j=1,2 , $$ 
where $f \in \mathcal{F}$ with envelope $F(\bz)=\sup_{f \in \mathcal{F}} |f(\bz)|$. Assume the following:
\begin{enumerate}
\item The bandwidth $h=h_n$ satisfies $h \rightarrow 0$ and $nh^3 \rightarrow \infty$ as $n \rightarrow \infty$.
\item The kernel $K$ is a bounded symmetric probability density with support $[-1,1]$.
\item $A$ has compact support $\mathcal{A}$ and continuous density $\varpi$.
\item The envelope $F$ is uniformly bounded, i.e., $||F||_\mathcal{Z} \leq f_{max} < \infty$.
\item $\mathcal{F}$ has a finite uniform entropy integral, i.e., $J(\delta,\mathcal{F},L_2)<\infty$.
\end{enumerate}
Then $\{\Gn V_{n,j}(\cdot) : n \geq 1 \}$ is stochastically equicontinuous.
\end{lemma}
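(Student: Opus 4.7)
The plan is to apply Theorem 2.11.1 of \textcite{van1996weak}, which furnishes stochastic equicontinuity from the three Lindeberg-type conditions (1)--(3) recited at the beginning of Section 9.3. I would work with the uniform-norm semimetric $\rho(f_1,f_2) = \|f_1 - f_2\|_{\mathcal{Z}}$, which matches the sup-norm consistency conditions imposed on the nuisance estimators elsewhere in the paper.

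The first preparatory observation is that, because $K$ is bounded (condition b,c) and supported on $[-1,1]$ (so that $|u|^{j-1}\leq 1$ on the support of $K(u)$), and because $F$ is uniformly bounded by $f_{\max}$ (condition d), the process satisfies the deterministic envelope bound $\|V_{n,j}\|_{\mathcal{F}} \leq h^{-1/2} K_{\max} f_{\max}$. Since $nh \to \infty$ (which follows from $nh^3 \to \infty$ together with $h\to 0$), this envelope bound is eventually dominated by any $\varepsilon\sqrt{n}$, so the indicator in condition (1) is eventually zero almost surely and condition (1) holds trivially.

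For condition (2), I would change variables via $u = (A-a)/h$ to obtain
\begin{equation*}
\E\bigl[V_{n,j}(f_1-f_2)^2\bigr] = \int_{-1}^{1} u^{2(j-1)} K(u)^2 \, \E\bigl[(f_1-f_2)^2 \bigm| A = a+hu\bigr] \varpi(a+hu)\, du.
\end{equation*}
Since $|u|^{2(j-1)} K(u)^2 \leq K_{\max}^2$ on $[-1,1]$, since $\varpi$ is continuous on a neighborhood of the interior point $a$ (hence bounded by some $\varpi_{\max}$ for all small $h$), and since the conditional expectation is no larger than $\|f_1-f_2\|_{\mathcal{Z}}^2$, the right-hand side is bounded by $2\varpi_{\max} K_{\max}^2 \|f_1-f_2\|_{\mathcal{Z}}^2$. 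Hence $\sup_{\rho(f_1,f_2)<\delta_n} \E[\{V_n(f_1)-V_n(f_2)\}^2] \leq 2\varpi_{\max} K_{\max}^2 \delta_n^2 \to 0$, establishing condition (2).

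Condition (3), the uniform entropy condition, follows from assumption (5) by a standard rescaling argument: substituting $\varepsilon = \|F\|_{\Pn,2}\, u$ in the integral and using $N(u\|F\|_{\Pn,2},\mathcal{F},L_2(\Pn)) \leq \sup_Q N(u\|F\|_{Q,2},\mathcal{F},L_2(Q))$ gives
\begin{equation*}
\int_0^{\delta_n} \sqrt{\log N(\varepsilon,\mathcal{F},L_2(\Pn))}\, d\varepsilon \leq \|F\|_{\Pn,2}\, J\!\bigl(\delta_n/\|F\|_{\Pn,2},\mathcal{F},L_2\bigr).
\end{equation*}
Because $\|F\|_{\Pn,2} \leq f_{\max}$ and (taking the envelope to be the constant $f_{\max}$ if necessary) $\|F\|_{\Pn,2}$ is bounded below in probability, $\delta_n/\|F\|_{\Pn,2} \to_p 0$; since $J(\infty,\mathcal{F},L_2)<\infty$, its tail $J(\delta,\mathcal{F},L_2) \to 0$ as $\delta\to 0$ by dominated convergence, giving (3). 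I expect the main technical nuisance to be condition (3), not because it is deep, but because it requires the rescaling step to reconcile the paper's envelope-normalized definition of the uniform entropy integral with the unnormalized $L_2(\Pn)$ form appearing in \textcite{van1996weak}; once that bookkeeping is done, the three conditions combine through Theorem 2.11.1 to yield the claimed stochastic equicontinuity.
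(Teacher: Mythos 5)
Your overall strategy is the same as the paper's: verify conditions (1)--(3) of Theorem 2.11.1 of \textcite{van1996weak} under the uniform-norm semimetric. Your treatment of condition (1) is actually cleaner than the paper's: the deterministic envelope bound $\|V_{n,j}\|_{\mathcal{F}} \leq h^{-1/2}K_{\max}f_{\max}$ makes the indicator identically zero once $\varepsilon\sqrt{nh} > K_{\max}f_{\max}$, so no dominated-convergence argument is needed (the paper runs a DCT argument to reach the same conclusion). Your condition (2) computation is essentially identical to the paper's, with $\int_{-1}^{1}u^{2(j-1)}K(u)^2\,du \leq 2K_{\max}^2$ playing the role of the paper's explicit kernel integral.

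The gap is in condition (3). The entropy condition in Theorem 2.11.1 is not literally a covering-number condition on $\mathcal{F}$ in the plain $L_2(\Pn)$ distance; it is a condition on the covering numbers of $\mathcal{F}$ in the \emph{random semimetric induced by the process increments}, $d_n(f_1,f_2)^2 = n^{-1}\sum_i \{V_{n,j}(f_1)(\mathbf{Z}_i)-V_{n,j}(f_2)(\mathbf{Z}_i)\}^2$, which here carries the weight $h\,g_{ha}(A_i)^2K_{ha}(A_i)^2 \asymp h^{-1}K((A_i-a)/h)^2$. Because these weights blow up as $h\to 0$, a $d_n$-ball of radius $\varepsilon$ is much smaller than an $L_2(\Pn)$-ball of radius $\varepsilon$, and your rescaling argument --- which bounds $\int_0^{\delta_n}\sqrt{\log N(\varepsilon,\mathcal{F},L_2(\Pn))}\,d\varepsilon$ by $\|F\|_{\Pn,2}\,J(\delta_n/\|F\|_{\Pn,2},\mathcal{F},L_2)$ --- does not control the quantity the theorem actually requires. (You were arguably invited into this reading by the paper's own loose recitation of condition (3), but its proof does not use that recitation.) The paper's fix is the ``measure-like'' device: one observes that $n^{-1}\{V_{n,j}(f_1)-V_{n,j}(f_2)\}^2 \leq \int (f_1-f_2)^2\,d\nu_{ni}$ for the random measures $\nu_{ni}=\sqrt{h}\,g_{ha}K_{ha}\,\delta_{\mathbf{Z}_i}$, and then invokes Lemma 2.11.6 of \textcite{van1996weak}, which converts a finite \emph{uniform} entropy integral (the supremum over all finitely discrete $Q$ in assumption (5) is exactly what absorbs the arbitrary kernel weights, after normalizing by the envelope) into the required $d_n$-entropy bound. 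Your argument needs this extra step --- or an equivalent one showing that the weighted empirical measures are covered by the supremum over $Q$ --- to be complete.
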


\begin{proof}

Recall that to show stochastic equicontinuity we can check conditions (1)--(3) of Theorem 2.11.1 from \citet{van1996weak}, as given earlier. 

We will show Lindeberg condition (1) using the dominated convergence theorem, which says if $X_n \inprob X$ and $|X_n| \leq Y$ with $\E(Y) < \infty$ then $\E(X_n) \rightarrow \E(X)$. First note that $||V_{n,j}||_\mathcal{F}^2 \ I(||V_{n,j}||_\mathcal{F} > \varepsilon \sqrt{n}) = o_p(1)$ since for any $\delta>0$
\begin{align*}
\lim_{n \rightarrow \infty}  P&\Big\{ ||V_{n,j}||_\mathcal{F}^2 \ I(||V_{n,j}||_\mathcal{F} > \varepsilon \sqrt{n}) \geq \delta \Big\} \\
&\leq \lim_{n \rightarrow \infty}  P\Big( ||V_{n,j}||_\mathcal{F} > \varepsilon \sqrt{n} \Big) \\
&= \lim_{n \rightarrow \infty}  P\Big\{ (A-a)^{j-1} K\left(\frac{A-a}{h}\right) F(Z) > \varepsilon \sqrt{nh^{2j-1}} \Big\} \\
&\leq \lim_{n \rightarrow \infty}  P\Big\{ (A-a)^{j-1} ||K||_{[-1,1]} f_{max} > \varepsilon \sqrt{nh^{2j-1}} \Big\}  .
\end{align*}
The last line above used the kernel and envelope conditions (b) and (c). The expression in the last line tends to zero as $n \rightarrow \infty$, since $nh \rightarrow \infty$ and $nh^3 \rightarrow \infty$ by the bandwidth condition (a) (note that $nh \rightarrow \infty$ is implied by the fact that $h \rightarrow 0$ and $nh^3 \rightarrow \infty$), and since $A$ has compact support by condition (c). We also have $||V_{n,j}||_\mathcal{F}^2 I(||V_{n,j}||_\mathcal{F} > \varepsilon \sqrt{n}) \leq ||V_{n,j}||_\mathcal{F}^2$ since $I(\cdot)$ is the indicator function, and $\E\{ ||V_{n,j}||_\mathcal{F}^2 \} < \infty$ since
\begin{align*}
\E\{ ||V_{n,j}||_\mathcal{F}^2 \} &= \E\Big[ \left(\frac{A-a}{h}\right)^{2(j-1)} \frac{1}{h} K\!\left(\frac{A-a}{h}\right)^2 F(Z)^2 \Big] \\
&\leq f_{max}^2 || \varpi ||_\mathcal{A} \ \int \left(\frac{t-a}{h}\right)^{2(j-1)} \frac{1}{h} K\!\left(\frac{A-a}{h}\right)^2  \ dt \\
&= f_{max}^2 || \varpi ||_\mathcal{A} \int u^{2(j-1)} K(u)^2  \ dt < \infty .
\end{align*}
The second line above follows by the distribution condition (c) and the envelope condition (d), and the last line is finite by the kernel properties assumed in condition (b). Therefore since $||V_{n,j}||_\mathcal{F}^2 \ I(||V_{n,j}||_\mathcal{F} > \varepsilon \sqrt{n}) = o_p(1)$ and $||V_{n,j}||_\mathcal{F}^2 \ I(||V_{n,j}||_\mathcal{F} > \varepsilon \sqrt{n}) \leq ||V_{n,j}||_\mathcal{F}^2$ with $\E\{ ||V_{n,j}||_\mathcal{F}^2 \} < \infty$, the dominated convergence theorem implies that $\E\{ ||V_{n,j}||_\mathcal{F}^2 \ I(||V_{n,j}||_\mathcal{F} > \varepsilon \sqrt{n})\} \rightarrow 0$ as $n \rightarrow \infty$ and thus Lindeberg condition (1) holds.

Lindeberg condition (2) holds when $\rho(\cdot)$ is the uniform norm since
\begin{align*}
\sup_{\rho(f_1,f_2)< \delta_n } &\E[ \{V_{n,j}(f_1)-V_{n,j}(f_2)\}^2] \\
&= \sup_{||f_1-f_2||_{\mathcal{Z}}< \delta_n } \E\left[ \left(\frac{A-a}{h}\right)^{2(j-1)} \frac{1}{h} K\left(\frac{A-a}{h}\right)^2 \Big\{ f_1(\bZ) - f_2(\bZ) \Big\}^2 \right] \\
&\leq \delta_n^2 \ \int \left(\frac{t-a}{h}\right)^{2(j-1)} \frac{1}{h} K\left(\frac{t-a}{h}\right)^2 \varpi(t) \ dt \\
&\leq \delta_n^2 \ || \varpi ||_\mathcal{A}  \int u^{2(j-1)} K(u)^2 \ dt \ \rightarrow \ 0 \ \ , \ \ \text{for any $\delta_n \rightarrow 0$}.
\end{align*}
The first equality above follows by definition, the second inequality by the fact that $||f_1-f_2||_{\mathcal{Z}} <\delta_n$, and the third by condition (c) and a change of variables. The last line tends to zero as $\delta_n \rightarrow 0$ by the kernel properties in condition (b).

Now we consider the complexity condition (3). As described in Section 2.11.1.1 (page 209) of \citet{van1996weak}, a process $(1/\sqrt{n}) V_n(f)$ is measure-like if for some (random) measure $\nu_{ni}$ we have
$$ \frac{1}{n} \Big\{ V_n(f_1)-V_n(f_2) \Big\}^2 \leq \int (f_1 - f_2)^2 \ d\nu_{ni}  \ , \ \ \text{for every $f_1, f_2 \in \mathcal{F}$} . $$
\citet{van1996weak} show in their Lemma 2.11.6 that if $\mathcal{F}$ has a finite uniform entropy integral, then measure-like processes indexed by $\mathcal{F}$ satisfy the complexity condition (3) of Theorem 2.11.1. 

Note that for our process $V_{n,j}(f)$ of interest, we have
\begin{align*}
\frac{1}{n} \Big\{ V_{n,j}(f_1)-V_{n,j}(f_2) \Big\}^2 &= \Big\{ f_1(\bZ) - f_2(\bZ) \Big\}^2 \sqrt{h} \left(\frac{A-a}{h}\right)^{j-1} \frac{1}{h} K\!\left(\frac{A-a}{h}\right) .
\end{align*}
Therefore the processes $V_{n,j}(f)$ are measure-like for the random measure $\nu_{ni}=\sqrt{h} g_{ha} K_{ha} \delta_{\bZ_i}$, where $\delta_{\bZ_i}$ denotes the Dirac measure. Hence, by Lemma 2.11.6 of \citet{van1996weak}, the fact that $\mathcal{F}$ has a finite uniform entropy integral (assumed in condition (e)) implies that complexity condition (3) is satisfied. 

Therefore the sequence $\{\Gn V_{n,j}(\cdot) : n \geq 1 \}$ is stochastically equicontinuous. 

\end{proof}

As mentioned earlier, Lemma 1 implies that if $|| \hat{f}-f||_\mathcal{Z} = o_p(1)$ then 
$$ \sqrt{nh} (\Pn-\Pb) \left[ \left(\frac{A-a}{h}\right)^{j-1} \frac{1}{h} K\!\left(\frac{A-a}{h}\right) \Big\{ \hat{f}(\bZ) - f(\bZ) \Big\}\right] =  o_p(1). $$

\begin{lemma}
Consider the sequence of processes $\{\Gn V_{n,j}(\cdot) : n \geq 1 \}$ with
$$ V_{n,j}(f) = \int f(\bL,t) \Big(\frac{t-a}{h}\Big)^{j-1} \frac{1}{h} K\!\Big(\frac{t-a}{h}\Big) \ dt \ , \ \ j=1,2 , $$ 
where $f \in \mathcal{F}$ with envelope $F$ as in Lemma 1. Assume conditions (b), (d), and (e) of Lemma 1 hold. Then $\{\Gn V_{n,j}(\cdot) : n \geq 1 \}$ is stochastically equicontinuous.
\end{lemma}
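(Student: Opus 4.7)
The plan is to mirror the structure of the proof of Lemma 1, verifying the three Lindeberg-type and complexity conditions of Theorem 2.11.1 in \textcite{van1996weak}. The key simplification compared to Lemma 1 is that $V_{n,j}(f)$ no longer evaluates $f$ at the random treatment $A$; instead the kernel weight has been integrated out over $t$. After the change of variables $u=(t-a)/h$, the process becomes $V_{n,j}(f) = \int f(\mathbf{L}, a+hu)\, u^{j-1} K(u)\,du$, which by conditions (b) and (d) is uniformly bounded by the finite constant $C_K := f_{max} \int |u|^{j-1} K(u)\,du$, independent of $n$ and $h$. This uniform boundedness does most of the work.

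The Lindeberg conditions (1) and (2) follow almost immediately. Condition (1) is essentially trivial: because $||V_{n,j}||_\mathcal{F} \leq C_K$ deterministically, for every $\varepsilon > 0$ the indicator $I(||V_{n,j}||_\mathcal{F} > \varepsilon \sqrt{n})$ vanishes for all sufficiently large $n$, so the expectation converges to zero. For condition (2), taking $\rho$ to be the uniform norm, I would use the pointwise bound
$$ |V_{n,j}(f_1) - V_{n,j}(f_2)| \leq ||f_1-f_2||_\mathcal{Z} \int \Big|\tfrac{t-a}{h}\Big|^{j-1} \tfrac{1}{h} K\!\Big(\tfrac{t-a}{h}\Big)\,dt = ||f_1-f_2||_\mathcal{Z} \int |u|^{j-1} K(u)\,du , $$
which yields $\E[\{V_{n,j}(f_1) - V_{n,j}(f_2)\}^2] = O(\delta_n^2) \to 0$ whenever $||f_1-f_2||_\mathcal{Z} < \delta_n \to 0$.

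The main step, and the one I expect to require the most care, is the complexity condition (3). Following the template from Lemma 1, I would show the process is measure-like and then invoke Lemma 2.11.6 of \textcite{van1996weak} together with the finite uniform entropy integral from condition (e). The subtlety is that for $j=2$ the weight $(t-a) h^{-2} K\{(t-a)/h\}$ is signed, so a direct Cauchy--Schwarz step does not produce a nonnegative random measure. The remedy is to pass to absolute values first and then apply Cauchy--Schwarz with the nonnegative weight $w_h(t) := |(t-a)/h|^{j-1} h^{-1} K\{(t-a)/h\}$, whose total mass $\int w_h(t)\,dt = \int |u|^{j-1} K(u)\,du$ is bounded independently of $h$. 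This yields
$$ \{V_{n,j}(f_1) - V_{n,j}(f_2)\}^2 \leq C_K \int \{f_1(\mathbf{L},t) - f_2(\mathbf{L},t)\}^2 w_h(t)\,dt , $$
so the process is measure-like with respect to the random measure $d\nu_{ni}(\mathbf{l},t) \propto \delta_{\mathbf{L}_i}(\mathbf{l}) w_h(t)\,dt$ (up to the $n^{-1}$ normalization convention from Lemma 1), after which Lemma 2.11.6 delivers condition (3) and hence stochastic equicontinuity.
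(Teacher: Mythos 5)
Your proposal is correct and follows essentially the same route as the paper's proof: both verify the two Lindeberg conditions via the deterministic uniform bound $f_{max}\int |u|^{j-1}K(u)\,du$ on $\|V_{n,j}\|_{\mathcal{F}}$, and both establish the complexity condition by showing the process is measure-like and invoking Lemma 2.11.6 of van der Vaart and Wellner together with condition (e). The only cosmetic difference is in the measure-like bound, where the paper applies Jensen's inequality with respect to the probability measure $K_{ha}(t)\,dt$ (yielding the weight $|(t-a)/h|^{2(j-1)}K_{ha}(t)$) while you apply Cauchy--Schwarz with the weight $|(t-a)/h|^{j-1}K_{ha}(t)$; both produce a nonnegative random measure of uniformly bounded total mass, so the arguments are equivalent.
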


\begin{proof}

The proof of Lemma 2 is very similar to that of Lemma 1. We again show Lindeberg condition (1) using the dominated convergence theorem. First note $||V_{n,j}||_\mathcal{F}^2 \ I(||V_{n,j}||_\mathcal{F} > \varepsilon \sqrt{n}) = o_p(1)$ since for any $\delta>0$
\begin{align*}
\lim_{n \rightarrow \infty}  P&\Big\{ ||V_{n,j}||_\mathcal{F}^2 \ I(||V_{n,j}||_\mathcal{F} > \varepsilon \sqrt{n}) \geq \delta \Big\} \leq \lim_{n \rightarrow \infty}  P\Big( ||V_{n,j}||_\mathcal{F} > \varepsilon \sqrt{n} \Big) \\
&= \lim_{n \rightarrow \infty}  P\Big\{ \int F(\bL,t) \{(t-a)/h\}^{j-1} K\{(t-a)/h\}/h \ dt > \varepsilon \sqrt{n} \Big\} \\
&\leq \lim_{n \rightarrow \infty}  I\Big\{ f_{max} \int |u|^{j-1} K(u) \ dt > \varepsilon \sqrt{n} \Big\} = 0.
\end{align*}
The last line above used the kernel and envelope conditions (b) and (d). We also have $||V_{n,j}||_\mathcal{F}^2 I(||V_{n,j}||_\mathcal{F} > \varepsilon \sqrt{n}) \leq ||V_{n,j}||_\mathcal{F}^2$ and $\E\{||V_{n,j}||_\mathcal{F}^2\}$ equals
$$ \bigg\{ \int F(\bL,t) \left(\frac{t-a}{h}\right)^{j-1} \frac{1}{h} K\!\left(\frac{t-a}{h}\right) dt \bigg\}^2 \\
\leq f_{max}^2 \bigg\{ \int |u|^{j-1} K(u) \ du \bigg\}^2,  $$ 
which is finite again using conditions (b) and (d). Therefore Lindeberg condition (1) holds since $\E\{||V_{n,j}||_\mathcal{F}^2 I(||V_{n,j}||_\mathcal{F} > \varepsilon \sqrt{n})\} \rightarrow 0$ by dominated convergence.

Lindeberg condition (2) holds with the uniform norm since, by definition and using the kernel condition (b),  $\sup_{\rho(f_1,f_2)<\delta_n} \E[ \{V_n(f_1)-V_n(f_2)\}^2 ]$ equals
\begin{align*}
\sup_{||f_1-f_2||_\mathcal{Z} < \delta_n } \! \E\bigg( \bigg[ &\int \Big\{ f_1(\bL,t) - f_2(\bL,t) \Big\} \left(\frac{t-a}{h}\right)^{j-1} \frac{1}{h} K\!\left(\frac{t-a}{h}\right)  dt \bigg]^2\bigg) \\
&\leq \delta_n^2 \bigg\{ \int |u|^{j-1} K(u) \ du \bigg\}^2 \rightarrow 0 \ \ , \ \ \text{for any $\delta_n \rightarrow 0$}.
\end{align*}

As in Lemma 1, we use that $V_{n,j}$ is measure-like to check condition (3). Here
\begin{align*}
\frac{1}{n} \{ V_{n,j}(f_1)&-V_{n,j}(f_2) \}^2 = \frac{1}{n} \bigg[ \int \{ f_1(\bL,t)-f_2(\bL,t)\} \Big(\frac{t-a}{h}\Big)^{j-1} \frac{1}{h} K\!\Big(\frac{t-a}{h}\Big) dt \bigg]^2 \\
&\leq \frac{1}{n}  \int \Big\{ f_1(\bL,t)-f_2(\bL,t)\Big\}^2 \left|\frac{t-a}{h}\right|^{2(j-1)} \frac{1}{h} K\!\left(\frac{t-a}{h}\right) dt 
\end{align*}
by Jensen's inequality. Therefore the processes $V_{n,j}(f)$ are measure-like, and the fact that $\mathcal{F}$ has a finite uniform entropy integral (assumed in condition (e)) implies that complexity condition (3) is satisfied. This concludes the proof.
\end{proof}

\section{Proof of Theorem 2}

Here we let $\tilde\theta_h(a)=\bg_{ha}(a)^\T \mathbf{\hat{D}}_{ha}^{-1} \Pn\{ \bg_{ha}(A) K_{ha}(A) \xi(\bZ; \overline\pi,\overline\mu)\}$ denote the infeasible estimator one would use if the nuisance functions were known, with $\mathbf{\hat{D}}_{ha}=\Pn\{\bg_{ha}(A) K_{ha}(A) \bg_{ha}(A)^\T \}$ as in the main text. Our proposed estimator is $\hat\theta_h(a)=\bg_{ha}(a)^\T \mathbf{\hat{D}}_{ha}^{-1} \Pn\{\bg_{ha}(A) K_{ha}(A) \hat\xi(\bZ;\hat\pi,\hat\mu)\}$. We use the decomposition
$$ \hat\theta_h(a) - \theta(a) =  \Big\{ \tilde\theta_h(a)-\theta(a) \Big\} + \Big\{ \hat\theta_h(a) - \tilde\theta_h(a) \Big\} = \Big\{ \tilde\theta_h(a) - \theta(a) \Big\} + (R_{n,1} + R_{n,2}) $$
where 
\begin{align*}
R_{n,1} &= \bg_{ha}(a)^\T \mathbf{\hat{D}}_{ha}^{-1} (\Pn-\Pb) \left[ \bg_{ha}(A) K_{ha}(A) \Big\{ \hat\xi(\bZ;\hat\pi,\hat\mu) - \xi(\bZ;\overline\pi,\overline\mu) \Big\} \right] \\
R_{n,2} &= \bg_{ha}(a)^\T \mathbf{\hat{D}}_{ha}^{-1} \Pb \left[ \bg_{ha}(A) K_{ha}(A) \Big\{ \hat\xi(\bZ;\hat\pi,\hat\mu) - \xi(\bZ;\overline\pi,\overline\mu) \Big\} \right] .
\end{align*}
 Our proof is divided into three parts, one for the analysis of each of the terms above.

\subsection{Convergence rate of $\tilde\theta_h(a)-\theta(a)$}

Since the infeasible estimator $\tilde\theta_h(a)$ is a standard local linear kernel estimator with outcome $\xi(\bZ;\overline\pi,\overline\mu)$ and regressor $A$, it can be analyzed with results from the local polynomial kernel regression literature. In particular, since our Assumption 2 (Positivity) along with conditions (b), (c), (d) of our Theorem 2 imply the bandwidth condition and conditions 1(i)-1(iv) in \citet{fan1993local}, by their Theorem 1 we have
$\E[ \tilde\theta_h(a) - \E\{ \xi(\bZ;\overline\pi,\overline\mu) \mid A=a\} ]^2 = O( 1/nh + h^4 )$. 
Further, condition (a) of our Theorem 1 implies $\E\{ \xi(\bZ;\overline\pi,\overline\mu) \mid A=a\}=\theta(a)$ by the results in Section 3 of this Appendix. Therefore $\E\{ \tilde\theta_h(a) -\theta(a)\}^2 = O( 1/nh + h^4 )$.

Now let $X_n=\tilde\theta_h(a)-\theta(a)$. The above implies that, for some $M^*>0$, $\limsup_{n \rightarrow \infty} \ \E\{ X_n^2 / (1/nh+h^4) \} \leq M^*$. Therefore for any $\epsilon > 0$, if $M \geq M^*/\epsilon$,
$$ \lim_{n \rightarrow \infty} P\left( \frac{X_n^2}{1/nh + h^4} \geq M \right) \leq \limsup_{n \rightarrow \infty} \frac{1}{M} \E\left( \frac{X_n^2}{1/nh+ h^4} \right) \leq M^*/M \leq \epsilon $$
where the first equality follows by Markov's inequality, the second by the fact that $\E(X_n^2) = O( 1/nh + h^4 )$, and the third by definition of $M$. Since $\epsilon>0$ was arbitrary this implies $\{ \tilde\theta_h(a) -\theta(a)\}^2 = O_p( 1/nh + h^4 )$.

Now let $b_n=1/\sqrt{nh} + h^2$ and $c_n = 1/nh + h^4$, and note that 
$$ P\left( \left| \frac{X_n}{b_n} \right| \geq \sqrt{M} \right) = P\left( \left| \frac{X_n^2}{c_n + 2h\sqrt{h/n}} \right| \geq M \right) \leq P\left( \left| \frac{X_n^2}{c_n} \right| \geq M \right) . $$
Taking limits as $n \rightarrow \infty$ implies that
$$ \Big| \tilde\theta_h(a) - \theta(a) \Big| = O_p\left( \frac{1}{\sqrt{nh}} + h^2 \right) . $$

\subsection{Asymptotic negligibility of $R_{n,1}$}

Now we will show that
$$ R_{n,1} = \bg_{ha}(a)^\T \mathbf{\hat{D}}_{ha}^{-1} (\Pn-\Pb) \left[ \bg_{ha}(A) K_{ha}(A) \Big\{ \hat\xi(\bZ;\hat\pi,\hat\mu) - \xi(\bZ;\overline\pi,\overline\mu) \Big\} \right] $$
is asymptotically negligible up to order $\sqrt{nh}$, i.e., $|R_{n,1}|=o_p(1/\sqrt{nh})$. 

First we will show that $\bg_{ha}(a)^\T \mathbf{\hat{D}}_{ha}^{-1}=O_p(1)$. Consider the elements of the matrix $\mathbf{\hat{D}}_{ha}$. Using the continuity of $\varpi$ from condition (d) of Theorem 2 in the main text, along with properties of the kernel function from condition (c), it is straightforward to show that
$$ \E\Big( [ \Pn\{K_{ha}(A)\} - \varpi(a) ]^2 \Big) = O(h) + O(1/nh) . $$
Hence $\E( [ \Pn\{K_{ha}(A)\} - \varpi(a) ]^2)=o(1)$, since $h \rightarrow 0$ and $nh \rightarrow \infty$ by condition (b), and therefore $\Pn\{K_{ha}(A)\} \inprob \varpi(a)$ by Markov's inequality. This is a standard result in classical kernel estimation problems. By the same logic we similarly have
\begin{equation*}
\begin{gathered}
\Pn\{ K_{ha}(A) (A-a)/h \} \ \inprob \ 0 , \\
\Pn[ K_{ha}(A) \{(A-a)/h\}^2 ] \ \inprob \ \varpi(a) \int u^2 K(u) \ du . 
\end{gathered}
\end{equation*}
Therefore  $ \bg_{ha}(a)^\T \mathbf{\hat{D}}_{ha}^{-1} \inprob \begin{pmatrix} 1 & 0 \end{pmatrix} \text{diag}\{ \varpi(a), \varpi(a) \nu_2 \}^{-1} = \begin{pmatrix} \varpi(a)^{-1} & 0 \end{pmatrix} , $
where $\text{diag}(c_1,c_2)$ is a $(2\times 2)$ diagonal matrix with elements $c_1$ and $c_2$ on the diagonal, $\nu_2=\int u^2 K(u) \ du$, and $\varpi(a) \neq 0$ because of Assumption 2 (Positivity). Thus we have shown that $\bg_{ha}(a)^\T \mathbf{\hat{D}}_{ha}^{-1}=\begin{pmatrix} \varpi(a)^{-1} & 0 \end{pmatrix} + o_p(1)=O_p(1)$.

Now we will analyze the term 
$$ (\Pn-\Pb) \left[ \bg_{ha}(A) K_{ha}(A) \Big\{ \hat\xi(\bZ;\hat\pi,\hat\mu) - \xi(\bZ;\overline\pi,\overline\mu) \Big\} \right] , $$
which we will show is $o_p(1/\sqrt{nh})$. This is equivalent to showing
$$ \Gn \left[ \sqrt{h} \ \bg_{ha}(A) K_{ha}(A) \hat\xi(\bZ) \right] = \Gn \left[ \sqrt{h} \ \bg_{ha}(A) K_{ha}(A) \overline\xi(\bZ) \Big\} \right] + o_p(1) , $$
where we define $\hat\xi(\bZ)=\hat\xi(\bZ;\hat\pi,\hat\mu)$ and $\overline\xi(\bZ)=\xi(\bZ;\overline\pi,\overline\mu)$. Note that, as discussed in the previous section on stochastic equicontinuity, if $||\hat\xi-\overline\xi||_\mathcal{Z}= o_p(1)$ then the above result follows if the sequence of empirical processes $\{ \Gn V_n( \cdot) : n\geq 1\}$ is stochastically equicontinuous, where we define $V_n(\xi)=\sqrt{h} \bg_{ha}(A) K_{ha}(A) \xi(\bZ)$ with $\xi \in \Xi$ for some metric space $\Xi$. Thus first we will show that $||\hat\xi-\overline\xi||_\mathcal{Z}=\sup_{\bz \in \mathcal{Z}} |\hat\xi(\bz;\hat\pi,\hat\mu)-\xi(\bz;\overline\pi,\overline\mu)| = o_p(1)$. Then we will check the conditions given in Lemma 1 of the previous section, which ensure that $\{ \Gn V_n( \cdot) : n\geq 1\}$ defined above is stochastically equicontinuous.

First note that after some rearranging we can write
\begin{align*}
\hat\xi(\bz) &- \xi(\bz) = \frac{y-\hat\mu(\bl,a)}{\hat\pi(a \mid \bl)} \hat\varpi(a) + \hat{m}(a) - \frac{y-\overline\mu(\bl,a)}{\overline\pi(a \mid \bl)} \overline\varpi(a) - \overline{m}(a) \\
&= \frac{y-\overline\mu(\bl,a)}{\overline\pi(a \mid \bl)}  \frac{\hat\varpi(a)}{\hat\pi(a \mid \bl)} \Big\{ \overline\pi(a \mid \bl) - \hat\pi(a \mid \bl) \Big\} + \frac{\hat\varpi(a)}{\hat\pi(a \mid \bl)}  \Big\{ \overline\mu(\bl,a) - \hat\mu(\bl,a) \Big\} \\
& \hspace{.4in} + \frac{y-\overline\mu(\bl,a)}{\overline\pi(a \mid \bl)} \Big\{ \hat\varpi(a)-\overline\varpi(a)\Big\} + \Big\{ \hat{m}(a) - \overline{m}(a) \Big\} .
\end{align*}
Therefore, letting $\hat\xi(\bz) = \xi(\bz;\hat\pi,\hat\mu)$ and similarly $\overline\xi(\bz)=\xi(\bz;\overline\pi,\overline\mu)$, by the uniform boundedness assumed in condition (e) and the triangle inequality we have
$$ || \hat\xi - \overline\xi ||_\mathcal{Z} = O_p\Big( ||\hat\pi-\overline\pi||_\mathcal{Z} + ||\hat\mu-\overline\mu||_\mathcal{Z} + || \hat\varpi-\overline\varpi||_\mathcal{A}  + ||\hat{m}-\overline{m}||_\mathcal{A}  \Big) . $$
Therefore since $||\hat\pi-\overline\pi||_\mathcal{Z}=o_p(1)$ and $||\hat\mu-\overline\mu||_\mathcal{Z} = o_p(1)$ by definition, and since $O_p(o_p(1))=o_p(1)$, the above implies
$$ || \hat\xi - \overline\xi ||_\mathcal{Z} = O_p\Big( || \hat\varpi-\overline\varpi||_\mathcal{A}  + ||\hat{m}-\overline{m}||_\mathcal{A}  \Big) + o_p(1) . $$
Now, since by definition $\hat\varpi(a)=\Pn\{\hat\pi(a \mid \bL)\}$ and $\overline\varpi(a)=\E\{\overline\pi(a \mid \bL)\}$, we have that
\begin{align*}
|| \hat\varpi - \overline\varpi ||_\mathcal{A} &= \sup_{a \in \mathcal{A}} | \hat\varpi(a) - \overline\varpi(a) | = \sup_{a \in \mathcal{A}} \Big| \Pn \hat\pi(a \mid \bL) - \Pb \overline\pi(a \mid \bL) \Big| \\
&= \sup_{a \in \mathcal{A}} \Big| \Pn \{\hat\pi(a \mid \bL) - \overline\pi(a \mid \bL)\}+ (\Pn- \Pb) \overline\pi(a \mid \bL) \Big| \\
&\leq \sup_{a \in \mathcal{A}} \Big| \Pn \{\hat\pi(a \mid \bL) - \overline\pi(a \mid \bL)\} \Big| + \sup_{a \in \mathcal{A}} \Big| (\Pn- \Pb) \overline\pi(a \mid \bL) \Big| \\
&\leq || \hat\pi - \overline\pi||_\mathcal{Z} + \sup_{a \in \mathcal{A}} \Big| (\Pn- \Pb) \overline\pi(a \mid \bL) \Big| , 
\end{align*}
where the last two lines used the triangle inequality. By definition the first term on the right hand side of the last line is $o_p(1)$, and by the uniform entropy assumption in condition (e) the second term is also $o_p(1)$ since it implies that $\overline\pi$ is Glivenko-Cantelli \citep{van2000asymptotic,van1996weak}. Therefore we have $||\hat\varpi-\overline\varpi||_\mathcal{Z}=o_p(1)$. By exactly the same logic, using definitions and condition (e) we similarly have
\begin{align*}
|| \hat{m} - \overline{m} ||_\mathcal{A} 
&\leq \sup_{a \in \mathcal{A}} \Big| \Pn \{\hat\mu(\bL,a) - \overline\mu( \bL,a) \} \Big| + \sup_{a \in \mathcal{A}} \Big| (\Pn- \Pb) \overline\mu( \bL,a) \Big| \\
&\leq || \hat\mu - \overline\mu||_\mathcal{Z} + \sup_{a \in \mathcal{A}} \Big| (\Pn- \Pb) \overline\mu( \bL,a) \Big| = o_p(1) .
\end{align*}
Therefore $||\hat\xi-\overline\xi||_\mathcal{Z}=\sup_{\bz \in \mathcal{Z}} |\hat\xi(\bz;\hat\pi,\hat\mu)-\xi(\bz;\overline\pi,\overline\mu)| = o_p(1)$. 

Now we will show that the conditions given in Lemma 1 hold, indicating that the sequence $\{ \Gn V_n( \cdot) : n\geq 1\}$ defined above is stochastically equicontinuous. Conditions (a)--(c) of Lemma 1 are given exactly in the statement of Theorem 2 and so hold immediately. For conditions (d) and (e) of Lemma 1 we need to consider the space $\Xi$ containing elements $\xi(\bz)$. The space $\Xi$ can be constructed as a transformation of the spaces $(\mathcal{F}_\pi,\mathcal{F}_\mu,\mathcal{F}_\varpi,\mathcal{F}_m)$ containing the functions $(\pi,\mu,\varpi,m)$, along with the single identity function that takes $\bZ$ as input and outputs $Y$. Specifically, we have
$$ \Xi = (Y \oplus \mathcal{F}_\mu) \mathcal{F}_\pi^{-1} \mathcal{F}_\varpi \oplus \mathcal{F}_m $$
where $Y$ is shorthand for the single function that outputs $Y$ from $\bZ$, and we define 
$\mathcal{F}_1 \oplus \mathcal{F}_2 = \{ f_1 + f_2 : f_j \in \mathcal{F}_j\}$, $\mathcal{F}^{-1} = \{ 1/f : f \in \mathcal{F}\}$, and similarly $\mathcal{F}_1 \mathcal{F}_2 = \{ f_1 f_2 : f_j \in \mathcal{F}_j\}$,
for arbitrary function classes $\mathcal{F}$ and $\mathcal{F}_j$ containing functions $f$ and $f_j$ respectively. For more discussion of such constructions of higher-level function spaces based on lower-level building blocks, we refer the reader to \citet{pollard1990empirical} (Section 5), \citet{andrews1994empirical} (Section 4.1), \citet{van1996weak} (Section 2.10), and \citet{van2000asymptotic} (Examples 19.18--19.20); for use in a related example and more discussion see \citet{van2006estimating} (Section 5).

By condition (e) of Theorem 2, the classes $(\mathcal{F}_\pi,\mathcal{F}_\mu,\mathcal{F}_\varpi,\mathcal{F}_m)$ are uniformly bounded (i.e., their minimal envelopes are bounded above by some constant). Similarly the class $\mathcal{F}_\pi^{-1}$ is also uniformly bounded by the second part of condition (e). Therefore the constructed class $\Xi$ is bounded as well, so that condition (d) of Lemma 1 holds.

Condition (e) of Lemma 1 can be verified by using permanence or stability properties of the uniform entropy integral \citep{andrews1994empirical,van1996weak,van2006estimating}. Specifically, by condition (e) of Theorem 2,  the classes $(\mathcal{F}_\pi,\mathcal{F}_\mu,\mathcal{F}_\varpi,\mathcal{F}_m)$ all have a finite uniform entropy integral (as does the single function $Y$, or any finite set of functions). Therefore by Theorem 3 of \citet{andrews1994empirical}, since $\mathcal{F}_\pi^{-1}$ is appropriately bounded with finite envelope, it follows that the class $\Xi$ also has a finite uniform entropy integral. Thus condition (e) of Lemma 1 holds. For results similar to Theorem 3 of \citet{andrews1994empirical}, also see Theorem 2.10.20 of \citet{van1996weak}, and Lemma 5.1 and subsequent examples of \citet{van2006estimating}.

Thus since the conditions of Lemma 1 hold, the sequence $\{ \Gn V_n( \cdot) : n\geq 1\}$ with $V_n(\xi)=\sqrt{h} \bg_{ha}(A) K_{ha}(A) \xi(\bZ)$ is stochastically equicontinuous, and since $||\hat\xi-\overline\xi||_\mathcal{Z}=\sup_{\bz \in \mathcal{Z}} |\hat\xi(\bz;\hat\pi,\hat\mu)-\xi(\bz;\overline\pi,\overline\mu)| = o_p(1)$, it therefore follows that
$$ (\Pn-\Pb) \left[ \bg_{ha}(A) K_{ha}(A) \Big\{ \hat\xi(\bZ;\hat\pi,\hat\mu) - \xi(\bZ;\overline\pi,\overline\mu) \Big\} \right] = o_p(1/\sqrt{nh}) . $$
Combined with the fact that $\bg_{ha}(a)^\T \mathbf{\hat{D}}_{ha}^{-1}=O_p(1)$, this implies that $R_{n,1}=o_p(1/\sqrt{nh})$ and so is asymptotically negligible.

\subsection{Convergence rate of $R_{n,2}$}

In this section we will derive the convergence rate of
$$ R_{n,2} = \bg_{ha}(a)^\T \mathbf{\hat{D}}_{ha}^{-1} \Pb \left[ \bg_{ha}(A) K_{ha}(A) \Big\{ \hat\xi(\bZ;\hat\pi,\hat\mu) - \xi(\bZ;\overline\pi,\overline\mu) \Big\} \right] , $$
which will depend on how well the nuisance functions $\pi$ and $\mu$ are estimated. 

In the previous subsection we showed that $\bg_{ha}(a)^\T \mathbf{\hat{D}}_{ha}^{-1}=O_p(1)$ using conditions (b), (c), and (d) of Theorem 3, along with Assumption 2 (Positivity). Therefore we will consider the term $\Pb [ \bg_{ha}(A) K_{ha}(A) \{ \hat\xi(\bZ;\hat\pi,\hat\mu) - \xi(\bZ;\overline\pi,\overline\mu) \} ]$,
which is a vector with $j^{th}$ element ($j=1,2$) equal to
$$ \int_\mathcal{A} g_{ha,j}(t) K_{ha}(t) \ \Pb\Big\{ \hat\xi(\bZ;\hat\pi,\hat\mu) - \xi(\bZ;\overline\pi,\overline\mu) \mid A=t \Big\} \varpi(t) \ dt , $$
where $g_{ha,j}(t)=\{(t-a)/h\}^{j-1}$ as before. Note that
\begin{align*}
\Pb\{ \hat\xi&(\bZ;\hat\pi,\hat\mu) - \xi(\bZ;\overline\pi,\overline\mu) \mid A=t \} = \Pb\left\{ \frac{Y-\hat\mu(\bL,A)}{\hat\pi(A \mid \bL)/\hat\varpi(A)} \Bigm| A=t \right\} + \hat{m}(t) - \theta(t) \\
&= \Pb\left[ \Big\{\mu(\bL,t)-\hat\mu(\bL,t) \Big\} \left\{ \frac{\pi(t \mid \bL)/\varpi(t)}{\hat\pi(t \mid \bL)/\hat\varpi(t)} \right\} \right] + \hat{m}(t) - \theta(t) \\
&= \frac{\hat\varpi(t)}{\varpi(t)} \ \Pb\!\left[ \Big\{\mu(\bL,t)-\hat\mu(\bL,t) \Big\} \left\{ \frac{\pi(t \mid \bL) - \hat\pi(t \mid \bL)}{\hat\pi(t \mid \bL)} \right\} \right] \\
& \hspace{.5in} + \frac{1}{\varpi(t)} \ \Pb\Big\{\hat\pi(t \mid \bL)-\pi(t \mid \bL) \Big\} \Pb\Big\{\mu(\bL,t)-\hat\mu(\bL,t) \Big\} \\
& \hspace{.5in} + \frac{\Pb\{\mu(\bL,t)-\hat\mu(\bL,t) \}}{\varpi(t)} (\Pn-\Pb) \{\hat\pi(t \mid \bL)\} + (\Pn- \Pb)\{\hat\mu(\bL,t) \} .
\end{align*}
The first equality above follows since $\E\{\xi(\bZ;\overline\pi,\overline\mu) \mid A=t \}=\theta(t)$ because either $\overline\pi=\pi$ or $\overline\mu=\mu$ (as shown in Section 3), the second by iterated expectation and the fact that $p(\bl \mid a)=\{\pi(a \mid \bl)/\varpi(a)\}p(\bl)$, and the third by rearranging terms and the definitions $\hat\varpi(t)=\Pn\{\hat\pi(t \mid \bL)\}$ and $\hat{m}(t)=\Pn\{\hat\mu(\bL,t)\}$. 

Therefore using the Cauchy-Schwarz inequality ($\Pb(fg)\leq ||f|| \ ||g||$), the triangle inequality, Assumption 2 (Positivity), and the uniform boundedness assumed in condition (e), we have
\begin{align*}
\Big| &\Pb \Big[ {g}_{ha,j}(A) K_{ha}(A) \Big\{ \hat\xi(\bZ;\hat\pi,\hat\mu) - \xi(\bZ;\overline\pi,\overline\mu) \Big\} \Big] \Big| \\
&= O_p\bigg( \ \left| \int_\mathcal{A} g_{ha,j}(t) K_{ha}(t) \ ||\hat\pi(t \mid \bL) - \pi(t \mid \bL) || \ || \hat\mu(\bL,t) - \mu(\bL,t) || \ dt \right| \\
& \hspace{.5in} + \left| (\Pn-\Pb) \int_\mathcal{A} g_{ha,j}(t) K_{ha}(t) \ \hat\pi(t \mid \bL) \ dt \right| \\
& \hspace{.5in} +  \left| (\Pn-\Pb) \int_\mathcal{A} g_{ha,j}(t) K_{ha}(t) \ \hat\mu(\bL,t) \ dt \right| \ \bigg).
\end{align*}

The last two terms above can be controlled by Lemma 2 in this Appendix. Specifically, this lemma can be applied since its condition (b) corresponds exactly to condition (b) of Theorem 2, and since its conditions (d) and (e) are implied by condition (e) of Theorem 2. Therefore since $||\hat\pi-\overline\pi||_\mathcal{Z}=o_p(1)$ and $||\hat\mu-\overline\mu||_\mathcal{Z}=o_p(1)$ by definition, the stochastic equicontinuity result of Lemma 2 implies that
\begin{equation*}
\begin{gathered}
 (\Pn-\Pb) \int_\mathcal{A} g_{ha,j}(t) K_{ha}(t) \Big\{ \hat\pi(t \mid \bL) - \overline\pi(t \mid \bL) \Big\} \ dt = o_p(1/\sqrt{n}), 
\end{gathered}
\end{equation*}
and similarly replacing $\pi$ with $\mu$. Therefore by the central limit theorem we have
\begin{equation*}
\begin{gathered}
 (\Pn-\Pb) \int_\mathcal{A} g_{ha,j}(t) K_{ha}(t) \ \hat\pi(t \mid \bL) \ dt = O_p(1/\sqrt{n}) , 
\end{gathered}
\end{equation*}
and similarly replacing $\pi$ with $\mu$.
Thus the last two terms in the inequality on the previous page are asymptotically negligible up to order $\sqrt{nh}$ since 
$$ X_n=O_p(1/\sqrt{n}) \implies \sqrt{n} X_n = O_p(1) \implies \sqrt{nh} X_n = O_p(1) o_p(1) = o_p(1). $$
Therefore since $O_p(o_p(1/\sqrt{nh}))=o_p(1/\sqrt{nh})$, we have
\begin{align*}
\Big| \Pb \Big[ {g}_{ha,j}&(A) K_{ha}(A) \Big\{ \hat\xi(\bZ;\hat\pi,\hat\mu) - \xi(\bZ;\overline\pi,\overline\mu) \Big\} \Big] \Big| \\
&= O_p\bigg( \ \left| \int_\mathcal{A} g_{ha,j}(t) K_{ha}(t) \ \phi_\pi(t) \ \phi_\mu(t) \ dt \right| \ \bigg) + o_p(1/\sqrt{nh}) 
\end{align*}
where $\phi_\pi(t)=||\hat\pi(t \mid \bL) - \pi(t \mid \bL) ||$ and $\phi_\mu(t)= || \hat\mu(\bL,t) - \mu(\bL,t) ||$.

Now let $||K||_{[-1,1]}=K_{max}$. Since $K(u) \leq K_{max} I(|u|\leq 1)$, we have
\begin{align*}
\int_\mathcal{A}  g_{ha,j}&(t) K_{ha}(t) \ \phi_\pi(t) \phi_\mu(t) \ dt = \int_\mathcal{A} \left(\frac{t-a}{h}\right)^{j-1} \frac{1}{h} K\!\left(\frac{t-a}{h}\right)  \phi_\pi(t) \phi_\mu(t) \ dt \\
&\leq K_{max} \left\{ \sup_{t:|t-a| \leq h} \phi_\pi(t) \right\} \left\{ \sup_{t:|t-a| \leq h} \phi_\mu(t) \right\} \int_{-1}^1 |u|^{j-1} \ du .
\end{align*}
In the main text we define $r_n(a)$ and $s_n(a)$ so that $\sup_{t:|t-a| \leq h} \phi_\pi(t)=O_p(r_n(a))$ and $\sup_{t:|t-a| \leq h} \phi_\mu(t)=O_p(s_n(a))$. Therefore
$$ \Big| \Pb \Big[ {g}_{ha,j}(A) K_{ha}(A) \Big\{ \hat\xi(\bZ;\hat\pi,\hat\mu) - \xi(\bZ;\overline\pi,\overline\mu) \Big\} \Big] \Big| = O_p\Big( r_n(a) s_n(a) \Big) . $$
Combining the above with the results from subsections 6.1 and 6.2 yields the desired rate from the statement of Theorem 2,
$$ \Big| \hat\theta_h(a) - \theta(a) \Big| = O_p\left( \frac{1}{\sqrt{nh}} + h^2 + r_n(a) s_n(a) \right)  . $$

\section{Proof of Theorem 3}

As in Theorem 2, we again use the decomposition
$$ \hat\theta_h(a) - \theta(a) =  \Big\{ \tilde\theta_h(a)-\theta(a) \Big\} + \Big\{ \hat\theta_h(a) - \tilde\theta_h(a) \Big\} = \Big\{ \tilde\theta_h(a) - \theta(a) \Big\} + (R_{n,1} + R_{n,2}) $$
where $\tilde\theta_h(a)=\bg_{ha}(a)^\T \mathbf{\hat{D}}_{ha}^{-1} \Pn\{\bg_{ha}(A) K_{ha}(A) \hat\xi(\bZ;\hat\pi,\hat\mu)\}$ is our proposed estimator, $\tilde\theta_h(a)=\bg_{ha}(a)^\T \mathbf{\hat{D}}_{ha}^{-1} \Pn\{\bg_{ha}(A) K_{ha}(A) \xi(\bZ;\overline\pi,\overline\mu)\}$ is the infeasible estimator with known nuisance functions, $\mathbf{\hat{D}}_{ha} = \Pn\{\bg_{ha}(A) K_{ha}(A) \bg_{ha}(A)^\T\}$, and
\begin{align*}
R_{n,1} &= \bg_{ha}(a)^\T \mathbf{\hat{D}}_{ha}^{-1} (\Pn-\Pb) \left[ \bg_{ha}(A) K_{ha}(A) \Big\{ \hat\xi(\bZ;\hat\pi,\hat\mu) - \xi(\bZ;\overline\pi,\overline\mu) \Big\} \right] \\
R_{n,2} &= \bg_{ha}(a)^\T \mathbf{\hat{D}}_{ha}^{-1} \Pb \left[ \bg_{ha}(A) K_{ha}(A) \Big\{ \hat\xi(\bZ;\hat\pi,\hat\mu) - \xi(\bZ;\overline\pi,\overline\mu) \Big\} \right] .
\end{align*}
We consider each term separately, as in the proof of Theorem 2.

\subsection{Asymptotic normality of $\tilde\theta_h(a)-\theta(a)$}

After scaling, the first term $\tilde\theta_h(a)-\theta(a)$ above is asymptotically normal by Theorem 1 from \citet{fan1994robust}, since $\tilde\theta_h(a)$ is a standard local linear kernel estimator with outcome $\xi(\bZ;\overline\pi,\overline\mu)$ and regressor $A$, and since $\E\{\xi(\bZ;\overline\pi,\overline\mu) \mid A=a\}=\theta(a)$ by condition (a) (i.e., either $\overline\pi=\pi$ or $\overline\mu=\mu$) as shown in Section 3 of this Appendix. Similar proofs for the asymptotic normality of local linear kernel estimators can be found elsewhere as well \citep{fan1992design,fan1995local, masry1997local,li2007nonparametric}. Specifically, under conditions (b), (c), and (d) of Theorem 3 stated in the main text, the proof given by \citet{fan1994robust} shows that, for $b_h(a)=\theta''(a)(h^2/2) \int u^2 K(u) \ du$, we have
$$ \sqrt{nh} \Big\{ \tilde\theta_h(a) - \theta(a) - b_h(a) \Big\} \indist N\left(0, \ \frac{\sigma^2(a) \int K(u)^2 \ du }{\varpi(a)} \right) $$
where, using the fact that $\E\{\xi(\bZ;\overline\pi,\overline\mu) \mid A=a\}=\theta(a)$ and rearranging,
\begin{align*}
\sigma^2(a) &\equiv \var\{\xi(\bZ;\overline\pi,\overline\mu)\mid A=a\} \\
&= \E\Big( \Big[ \xi(\bZ;\overline\pi,\overline\mu) - \E\{ \xi(\bZ;\overline\pi,\overline\mu) \mid A=a\} \Big]^2 \Bigm| A=a \Big) \\
&= \E\bigg[ \bigg\{ \frac{Y - \overline\mu(\bL,A)}{\overline\pi(A \mid \bL)/\overline\varpi(A)} + \overline{m}(A) - \theta(A) \bigg\}^2 \Bigm| A=a  \bigg] \\
&= \E\bigg[ \bigg\{ \frac{Y - \overline\mu(\bL,A)}{\overline\pi(A \mid \bL)/\overline\varpi(A)} \bigg\}^2  \Bigm| A=a  \bigg] - \{\theta(a)-\overline{m}(a)\}^2 \\
&= \E\!\left[ \frac{\tau^2(\bL,a) + \{\mu(\bL,a)-\overline\mu(\bL,a)\}^2}{\{\overline\pi(a \mid \bL)/\overline\varpi(a)\}^2 / \{\pi(a \mid \bL)/\varpi(a)\}} \right] - \Big\{ \theta(a)-\overline{m}(a) \Big\}^2 .
\end{align*}

\subsection{Asymptotic negligibility of $R_{n,1}$}

We showed $R_{n,1}=o_p(1/\sqrt{nh})$ in the proof of Theorem 2 in Section 6.2. 

\subsection{Asymptotic negligibility of $R_{n,2}$}

In the proof of Theorem 2 in Section 6.3 of this Appendix, we showed that $R_{n,2} = O_p( r_n(a) s_n(a) )$, 
where $r_n(a)$ and $s_n(a)$ are the local rates of convergence  for the nuisance estimators $\hat\pi$ and $\hat\mu$, as defined in the main text. By condition (f) of Theorem 3, we have $r_n(a) s_n(a) = o_p(1/\sqrt{nh})$ so that $R_{n,2} = O_p( o_p(1/\sqrt{nh}) ) = o_p(1/\sqrt{nh})$, and thus $R_{n,2}$ is asymptotically negligible up to order $\sqrt{nh}$.

Therefore the proposed estimator $\hat\theta_h(a)$ is asymptotically equivalent to the infeasible estimator $\tilde\theta_h(a)$. This yields the result from Theorem 2 in the main text.

\section{Uniform consistency}

In this section we sketch some conditions under which our estimator is not only consistent pointwise but also uniformly in the sense that $ \sup_{a \in \mathcal{A}} |\hat\theta_h(a) - \theta(a) | = o_p(1)$, and give a rate of convergence. However we leave a full treatment of this result to future work, in which we will also explore weak convergence of $\hat\theta_h(a)$ to some Gaussian process. This will be useful for testing and inference.

We use the same decomposition as in Sections 6-7 proving Theorems 2-3,
$$ \hat\theta_h(a) - \theta(a) = \Big\{ \tilde\theta_h(a) - \theta(a) \Big\} + R_{n,1}(a) + R_{n,2}(a) $$
with $R_{n,1}(a)=R_{n,1}$ and $R_{n,2}(a)=R_{n,2}$ defined as before. From \citet{masry1996multivariate} and \citet{hansen2008uniform}  (among others), under standard smoothness/bandwidth conditions we have
$$ \sup_{a \in \mathcal{A}} | \tilde\theta_h(a) - \theta(a) | = O_p\left( \sqrt{\frac{\log n}{nh}} + h^2 \right) . $$
Further, if the empirical process $V_n(a)=\sqrt{nh/\log n} R_{n,1}(a)$ is stochastically equicontinuous, then since $\sqrt{nh} |R_{n,1}(a)|=o_p(1)$ for any $a \in \mathcal{A}$ we have 
$$ \sup_{a \in \mathcal{A}} | R_{n,1}(a)| = o_p\left( \sqrt{\log n / nh} \right) , $$
and so is asymptotically negligible. Finally the same logic as in Section 6.3 yields 
$$ \sup_{a \in \mathcal{A}} | R_{n,2}(a)| = O_p\left( \sup_{a \in \mathcal{A}} ||\hat\pi(a \mid \bL) - \pi(a \mid \bL) || \cdot ||\hat\mu(\bL,a) - \mu(\bL,a) || \right) , $$
so that for $\sup_{a \in \mathcal{A}} ||\hat\pi(a \mid \bL) - \pi(a \mid \bL) ||=O_p(r_n^*)$ and similarly for $\hat\mu$ and $s_n^*$ we have
$$ \sup_{a \in \mathcal{A}} | \hat\theta_h(a) - \theta(a) | = O_p\left( \sqrt{\frac{\log n}{nh}} + h^2 + r_n^* s_n^* \right) . $$

\pagebreak

\section{Sample R code}

\footnotesize
\begin{verbatim}
### INPUT: l is an n*p matrix, a and y are vectors of length n     
###  l = matrix of covariates
###  a = vector of treatment values
###  y = vector of observed outcomes

# set up evaluation points & matrices for predictions
a.min <- min(a); a.max <- max(a)
a.vals <- seq(a.min,a.max,length.out=100)
la.new <- rbind(cbind(l,a), cbind( l[rep(1:n,length(a.vals)),],
  a=rep(a.vals,rep(n,length(a.vals))) ))
l.new <- la.new[,-dim(la.new)[2]]

# fit super learner (other methods could be used here instead)
sl.lib <- c("SL.earth","SL.gam","SL.gbm","SL.glm","SL.glmnet")
pimod <- SuperLearner(Y=a, X=l, SL.library=sl.lib, newX=l.new)
pimod.vals <- pimod$SL.predict; sq.res <- (a-pimod.vals)^2
pi2mod <- SuperLearner(Y=sq.res,X=l, SL.library=sl.lib, newX=l.new)
pi2mod.vals <- pi2mod$SL.predict
mumod <- SuperLearner(Y=y, X=cbind(l,a), SL.library=sl.lib,
  newX=la.new,family=binomial); muhat.vals <- mumod$SL.predict

# construct estimated pi/varpi and mu/m values
approx.fn <- function(x,y,z){ predict(smooth.spline(x,y),x=x2)$y }
a.std <- (la.new$a-pimod.vals)/sqrt(pi2mod.vals)
pihat.vals <- approx.fn(density(a.std[1:n])$x, density(a.std[1:n])$y,
  a.std); pihat <- pihat.vals[1:n]
pihat.mat <- matrix(pihat.vals[-(1:n)], nrow=n,ncol=length(a.vals))
varpihat <- approx.fn(a.vals, apply(pihat.mat,2,mean), a)
varpihat.mat <- matrix(rep(apply(pihat.mat,2,mean),n), byrow=T,nrow=n)
muhat <- muhat.vals[1:n]
muhat.mat <- matrix(muhat.vals[-(1:n)], nrow=n,ncol=length(a.vals))
mhat <- approx.fn(a.vals, apply(muhat.mat,2,mean), a)
mhat.mat <- matrix( rep(apply(muhat.mat,2,mean),n), byrow=T,nrow=n)

# form adjusted/pseudo outcome xi
pseudo.out <- (y-muhat)/(pihat/varpihat) + mhat

# leave-one-out cross-validation to select bandwidth
library(KernSmooth); kern <- function(x){ dnorm(x) }
w.fn <- function(bw){ w.avals <- NULL; for (a.val in a.vals){
  a.std <- (a-a.val)/bw; kern.std <- kern(a.std)/bw
  w.avals <- c(w.avals, mean(a.std^2*kern.std)*(kern(0)/bw) / 
    (mean(kern.std)*mean(a.std^2*kern.std)-mean(a.std*kern.std)^2)) 
}; return(w.avals/n) }
hatvals <- function(bw){ approx(a.vals,w.fn(bw),xout=a)$y }
cts.eff <- function(out,bw){ approx(locpoly(a,out,bw),xout=a)$y }
# note: choice of bandwidth range depends on specific problem
h.opt <- optimize( function(h){ hats <- hatvals(h); 
    mean( ((pseudo.out - cts.eff(pseudo.out,bw=h))/(1-hats))^2) },
  c(0.01,50), tol=0.01)$minimum

# estimate effect curve with optimal bandwidth
est <- approx(locpoly(a,pseudo.out,bandwidth=h.opt),xout=a.vals)$y

# estimate sandwich-style pointwise confidence band
se <- NULL; for (a.val in a.vals){ 
a.std <- (a-a.val)/h.opt; kern.std <- (kern(a.std)/h.opt)/h.opt
beta <- coef(lm(pseudo.out ~ a.std, weights=kern.std))
Dh <- matrix( c(mean(kern.std), mean(kern.std*a.std),
  mean(kern.std*a.std), mean(kern.std*a.std^2)), nrow=2)
kern.mat <- matrix(rep(kern((a.vals-a.val)/h)/h,n), byrow=T,nrow=n)
g2 <- matrix( rep((a.vals-a.val)/h, n), byrow=T, nrow=n)
intfn1.mat <-  kern.mat*(muhat.mat - mhat.mat)*varpihat.mat
intfn2.mat <-  g2*kern.mat*(muhat.mat - mhat.mat)*varpihat.mat
int1 <- apply(matrix(rep((a.vals[-1]-a.vals[-length(a.vals)])/2,n), 
  byrow=T,nrow=n)*intfn1.mat[,-1]+intfn1.mat[,-length(a.vals)],1,sum)
int2 <- apply(matrix(rep((a.vals[-1]-a.vals[-length(a.vals)])/2,n), 
  byrow=T,nrow=n)*intfn2.mat[,-1]+intfn2.mat[,-length(a.vals)],1,sum)
sigma <- cov(t(solve(Dh) %*% 
  rbind( wt*(out-beta[1]-beta[2]*a.std) + int1, 
  a.std*wt*(out-beta[1]-beta[2]*a.std) + int2 )))
se <- c(se, sqrt(sigma[1,1])) }
ci.ll <- est-1.96*se/sqrt(n); ci.ul <- est+1.96*se/sqrt(n)
\end{verbatim}
\normalsize

\end{document}